\documentclass[Journal,twocolumn,10pt]{IEEEtran}


\usepackage{graphicx}
\usepackage{amsfonts}
\usepackage{amssymb}
\usepackage{amsmath}



\newcommand{\comment}[1]{}

\newtheorem{theorem}{Theorem}[section]
     \newtheorem{lemma}[theorem]{Lemma}
     
     \newtheorem{corollary}[theorem]{Corollary}

     \newcommand{\qed}{\nobreak \ifvmode \relax \else
           \ifdim\lastskip<1.5em \hskip-\lastskip
           \hskip1.5em plus0em minus0.5em \fi \nobreak
           \vrule height0.75em width0.5em depth0.25em\fi}
\hyphenation{op-tical net-works semi-conduc-tor IEEEtran}

\begin{document}


\title{ Equalization for Non-Coherent UWB Systems with
Approximate Semi-Definite Programming }

\author{\authorblockN{Xudong Ma \\}
\authorblockA{FlexRealm Silicon Inc., Virginia, U.S.A.\\
Email: xma@ieee.org} }

\maketitle

\begin{abstract}

In this paper, we propose an approximate semi-definite programming
framework for demodulation and equalization of non-coherent
ultra-wide-band communication systems with
inter-symbol-interference. It is assumed that the communication
systems follow non-linear second-order Volterra models. We formulate
the demodulation and equalization problems as semi-definite
programming  problems. We propose an approximate algorithm for
solving the formulated semi-definite programming problems. Compared
with the existing non-linear equalization approaches, such as in
\cite{ma09}, the proposed semi-definite programming formulation and
approximate solving algorithm have low computational complexity and
storage requirements. We show that the proposed algorithm has
satisfactory error probability performance by simulation results.
The proposed non-linear equalization approach can be adopted for a
wide spectrum of non-coherent ultra-wide-band systems, due to the
fact that most non-coherent ultra-wide-band systems with
inter-symbol-interference follow non-linear second-order Volterra
signal models.

\comment{ algorithm for solving the SDP problems.

Because, most non-coherent ultra-wide-band systems with
inter-symbol-interference follow non-linear second-order Volterra
models,

 the proposed non-linear equalization approach can be
adopted for most non-coherent ultra-wide-band systems

The importance of the proposed scheme lies in the fact that most
non-coherent UWB systems with ISI follow non-linear second-order
Volterra models and existing linear equalization schemes can not
handle such non-linear ISI well.

In this paper, we consider a demodulation and equalization scheme
for low-complexity non-coherent Ultra-Wide-Band (UWB) communication
systems with Inter-Symbol-Interference (ISI). It is assumed that the
communication systems follow non-linear second-order Volterra
models.

We formulate the demodulation and equalization problems as
Semi-Definite Programming (SDP) problems. We propose an approximate
algorithm for solving the SDP problems.

Compared with the existing non-linear equalization approaches, such
as \cite{ma09}, the proposed SDP problem formulation and approximate
SDP solving algorithm has lower computational complexity and storage
requirements. The convergence of the proposed approximate algorithm
for SDP solving is proven. We show that the proposed algorithm has
satisfactory error probability performance by numerical simulation.
The importance of the proposed scheme lies in the fact that most
non-coherent UWB systems with ISI follow non-linear second-order
Volterra models and existing linear equalization schemes can not
handle such non-linear ISI well.

}

\comment{ , we conclude that the proposed demodulation and
equalization scheme is an attractive approach

exact algorithms for semi-definite programming, the approximate
algorithm has lower computational complexity and yet achieves
similar error-probability performance.

Recently, non-coherent Ultra-Wide-Band (UWB) communication schemes
have attracted much attention. However, the previous research has
shown that most non-coherent UWB systems with
Inter-Symbol-Interference (ISI) follow non-linear signal models and
existing equalization schemes for linear ISI can not handle
non-linear ISI well. Recently, a new non-linear equalization scheme
based on Semi-Definite Programming (SDP) has been proposed and the
promising performance of the scheme has been demonstrated in
\cite{ma09}.

In this paper, we consider a demodulation and equalization scheme
for non-coherent Ultra-Wide-Band (UWB) communication systems with
inter-symbol-interference.

Recently, non-coherent Ultra-Wide-Band (UWB) communication schemes
have attracted much research attention. Compared with coherent UWB
communication systems, such as direct-sequence UWB and multi-band
UWB, the non-coherent schemes have lower complexity, cost, and power
consumption. Therefore, the non-coherent schemes are more attractive
choices and more likely to be adopted in the future. However, the
previous research has shown that most non-coherent UWB systems
follow a non-linear second-order Volterra model under
inter-symbol-interference. Such schemes include transmitted
reference schemes, differential encoding schemes, and
energy-detection based schemes \cite{witrisal}. It is also known
that the equalization schemes for linear inter-symbol-interferences
can not handle the non-linear inter-symbol-interference well.

 including transmitting reference
schemes, differential encoding schemes, and energy detection based
schemes have attracted much attention.

In this paper, we discuss the demodulation and equalization schemes
for low-complexity Ultra-Wide-Band (UWB) communication systems with
Inter-Symbol-Interference (ISI).}

\comment{ In this paper, we consider the demodulation and
equalization problem of differential Impulse Radio (IR)
Ultra-WideBand (UWB) Systems with Inter-Symbol-Interference (ISI).
The differential IR UWB systems have been extensively discussed
recently \cite{choi02}, \cite{feng05}, \cite{franz03}, \cite{ho02},
\cite{hoctor02}, \cite{souilmi03}, \cite{witrisal}. The advantage of
differential IR UWB systems include simple receiver frontend
structure. One challenge in the demodulation and equalization of
such systems with ISI is that the systems have a rather complex
model. The input and output signals of the systems follow a
second-order Volterra model \cite{witrisal}. Furthermore, the noise
at the output is data dependent. In this paper, we propose a
reduced-complexity joint demodulation and equalization algorithm.
The algorithm is based on reformulating the nearest neighborhood
decoding problem into a mixed quadratic programming and utilizing a
semi-definite relaxation. The numerical results show that the
proposed demodulation and equalization algorithm has low
computational complexity, and at the same time, has almost the same
error probability performance compared with the maximal likelihood
decoding algorithm.}

\end{abstract}

\section{Introduction}

\label{section_introduction}

Ultra-Wide-Band (UWB) communication systems have attracted much
attention recently. The UWB communication systems have many
advantages including multi-path diversity, low possibilities of
intercept and high location estimation accuracy. However, UWB
systems also present many challenges compared with narrow-band
communication systems. Especially, the communication channels are
frequency selective with a large number of resolvable multi-paths.
Accurate estimation of channel impulse responses is complex and
difficult.

Existing modulation schemes for UWB can be roughly classified into
two categories, coherent modulation schemes and non-coherent
modulation schemes. The coherent schemes include direct-sequence UWB
and multi-band UWB \cite{win00} \cite{saberinia03} \cite{batra04}.
In these schemes, the demodulation usually depends on accurate
estimation of channel impulse responses. The coherent schemes can
achieve higher transmission rates. However, their complexity and
cost are usually high.

Unlike the coherent modulation schemes, in the non-coherent UWB
modulation schemes, the demodulation usually does not depend on full
knowledge of channel impulse responses. Therefore, the difficulty of
channel estimation is largely avoided. The non-coherent schemes
include various differential encoding schemes, and energy detection
based schemes (see for example \cite{ho02} \cite{choi02}
\cite{mo07}).

One difficulty with the non-coherent modulation schemes is that the
signal models are non-linear, if there exists
Inter-Symbol-Interference (ISI) in the systems \cite{witrisal}
\cite{mo07}. The existing linear equalization approaches generally
do not work well for such non-linear ISI. Because the UWB channels
usually have long delay spreads, the approach that increases the
spaces between symbols to avoid ISI, severally limits the achievable
rates, and therefore is not realistic.

In \cite{ma09}, a new non-linear equalization scheme based on
Semi-Definite Programming (SDP) has been proposed. It is shown that
even though the SDP relaxation approach is sub-optimal, the
performance loss is usually negligible. In \cite{ma09}, an
off-the-shelf general-purpose algorithm is adopted to solve the SDP
programming problems.

However, general-purpose SDP solving algorithms may not be suitable
choices for the UWB demodulation and equalization scenarios. First,
the general-purpose algorithms are usually designed to obtain very
accurate optimization solutions. While, in the UWB demodulation and
equalization scenarios, only approximate solutions are needed to
ensure low demodulation errors, because the SDP optimization
solutions are only intermediate results. By relaxing the requirement
on the accuracy of optimization solutions, the computational
complexity can be greatly reduced. Second, the general-purpose SDP
solving algorithms do not utilize problem structures. In fact, the
computational complexity can be largely reduced by utilizing the
structure of the problems.

In this paper, we propose a new iterative algorithm for solving the
SDP programming problems. The proposed algorithm has low
computational complexity and storage requirements, which make it an
attractive choice for low-complexity high-speed implementations.
First, the algorithm can achieve a close approximate solution of the
optimization problem after only a few iterations. Second, during
each iteration, only one optimization problem with much smaller
problem size needs to be solved. More precisely, the problem size is
equal to the number of bits in one signal block, while, the problem
size of the original matrix optimization is proportional to the
square of the number of bits in one signal block. The correctness
and convergence of the algorithm is proven in this paper. We also
show by simulation results that the demodulation and equalization
algorithm has satisfactory error probability performance.

In this paper, we demonstrate the performance of the proposed
non-linear demodulation and equalization scheme on differential UWB
systems. In fact, the proposed algorithm can also be applied on
other non-coherent UWB systems, because many non-coherent UWB
systems have the same non-linear second-order Volterra signal
models. One thing we wish to stress is that certain channel
parameter estimation is needed in the proposed demodulation
algorithm. However, the estimated model is at the symbol level,
rather than at the Nyquist frequency level. The complexity of this
partial channel estimation is acceptable.

\comment{

In this paper, we propose a new SDP programming formulation for the
non-linear demodulation and equalization problems. We also propose a
new iterative algorithm for solving the SDP programming problems.

it is usually assumed that the transmitted waveform and accurate
channel estimation are unknown at the receiver side.

Therefore, the demodulation scheme does not depends on the knowledge
transmitted waveform and channel estimation. For example, in the
transmitted reference schemes, each bit is represented by a pair of
pulses. The first pulse serve as a reference pulse. The transmitted
information is encoded into the correlation of the pulses, which is
invariant under channel distortion.

The transmitted reference schemes is not power efficient, because
half of the power is used to transmit reference pulses. The power
efficiency can be improved by reducing the number of reference
pulses. In the differential encoding schemes, one reference pulse is
used for a block of multiple bits. The transmitted message is
differentially encoded into the correlation between consecutive
pulses. One of the difficulties of differential encoding is that
bulky analog delay lines used to be expensive. However, low cost
sub-nanosecond-delay analog delay line is readily available right
now. Another differential encoding is differentially encoded at the
frequency domain.  Another non-coherent modulation scheme is based
on energy detection. The information is usually encoded by using the
On-OFF Keying (OOK) or Pulse-Position Modulation (PPM).

In this paper, we consider demodulation and equalization scheme for
the differential UWB with inter-symbol-interference. In the
differential impulse radio UWB systems, the transmitted information
is differentially encoded and low-complexity Autocorrelation (AcR)
receiver are adopted for demodulation.

One problem of the AcR receiver is that the transmitted messages and
the receiver decoding decision variables follow a nonlinear
second-order Volterra model, especially when
Inter-Symbol-Interference (ISI) is present in the systems
\cite{witrisal}. The maximal-likelihood sequential decoders can be
adopted, however their computational complexities generally grow
exponentially with the length of delay spread.

In this paper, we propose a  reduced-complexity demodulation and
equalization algorithm. The algorithm is based on a reformulation of
the nearest neighborhood decoding problem into a mixed quadratic
programming and a Semi-Definite Programming (SDP) relaxation. The
computational complexity of the proposed algorithm grows only
polynomially with respect to the block length and is independent of
the length of delay spread. We show by simulation results that the
performance loss caused by the proposed sub-optimal demodulation
algorithm is negligible.

SDP relaxation has been previously adopted to solve decoding
problems and combinatorial optimization problems. In \cite{goemans},
an approximation algorithm for maximum cut problem based on SDP
relaxation has been proposed. Detection algorithms for MIMO channels
based on SDP relaxation have also been proposed in \cite{nekuii},
\cite{sidiropoulos}, \cite{mobasher}, \cite{mobasher08},
\cite{wiesel05}. For interested readers, a review of SDP
optimization can be found in \cite{todd01}. }

The rest of this paper is organized as follows. In Section
\ref{section_system}, we describe the signal model. The SDP problem
formulation is presented in Section
\ref{section_problem_formulation}. We present the proposed
demodulation and equalization algorithm in Section
\ref{section_approximate_sdp}. Numerical results are presented in
Section \ref{sec_numerical}. Conclusions are presented in Section
\ref{sec_conclusion}.

%

%


Notation: we use the symbol $\pmb{\mathcal S}$ to denote the set of
symmetric matrices. Matrices are denoted by upper bold face letters
and column vectors are denoted by lower bold face letters. We use
$\pmb{A}\succeq 0$ to denote that the matrix $\pmb{A}$ is positive
semi-definite. We use $\pmb{a}\geq 0$ to denote that the elements of
the vector $\pmb{a}$ are non-negative. We use $\pmb{A}_{i,j}$ to
denote the element of the matrix $\pmb{A}$ at the $i$-th row and
$j$-th column. We use $\pmb{a}_{i}$ to denote the $i$-th element of
the vector $\pmb{a}$. We use $\pmb{A}^t$ and $\pmb{a}^t$ to denote
the transpose of the matrix $\pmb{A}$ and the vector $\pmb{a}$
respectively. We use $tr(\pmb{A})$ to denote the trace of the matrix
$\pmb{A}$. We use $\pmb{A}\cdot\pmb{B}$ to denote the inner product
of matrices $\pmb{A}$ and $\pmb{B}$, that is
$\pmb{A}\cdot\pmb{B}=tr(\pmb{A}^t\pmb{B})$. The function
$\mbox{sign}(\cdot)$ is defined as,
\begin{align}
\mbox{sign}(x)=\left\{
\begin{array}{ll}
1, &  \mbox{if }x\geq 0, \\
-1, & \mbox{otherwise}.
\end{array}
\right.
\end{align}

\section{Signal Model}
\label{section_system}

\begin{figure}[h]
 \centering
 \includegraphics[width=3in]{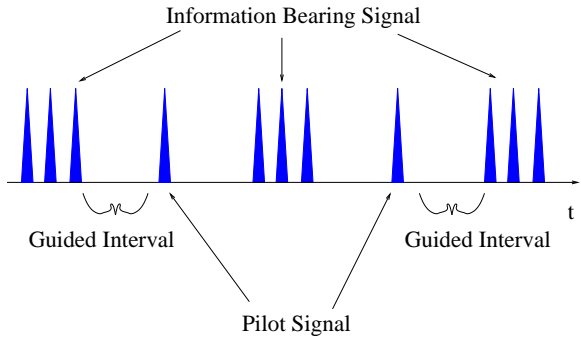}
 \caption{signal is transmitted in a block by block fashion}
 \label{fig_block_transmitting}
\end{figure}

In this paper, we consider the differential UWB systems. We assume
that information is transmitted in a block by block fashion as shown
in Fig. \ref{fig_block_transmitting}. That is, the transmitted
signal $s(t)$ can be written as,
\begin{align}
s(t) = \sum_{k=0}^{\infty} s_k^i(t-kT_b) +\sum_{k=0}^{\infty}
s_k^p(t-kT_b-\tau_{k})
\end{align}
where $s_k^i(t)$ is the signal waveform for the $k$th block of
information bearing signals, and $s_k^p(t)$ is the waveform for the
$k$th block of pilot signals.

The waveform for one block of information bearing signals can be
written as,
\begin{align}
s_k^i(t)=\sum_{n=0}^{N_b-1}\sum_{i=0}^{N_p-1}a_i[n]\bar{w}\left(t-t_i[n]\right)
\end{align}
where $\bar{w}(t)$ is the transmitted pulse, $a_i[n]$ is the pulse
polarity for the $i$-th pulse of the $n$-th symbol, $t_i[n]$ is the
pulse time for the $i$-th pulse of the $n$-th symbol. Each block has
$N_b$ symbols, and each symbol corresponds to $N_p$ pulses.

Denote the data symbol by $d[n]\in\{-1,+1\}$. The data symbols are
differentially encoded as,
\begin{align}
a_i[n]=\left\{
\begin{array}{ll}
a_{N_p-1}[n-1]d[n-1]b_{N_p-1}, & \mbox{ if }i=0 \\
a_{i-1}[n]d[n]b_{i-1}, & \mbox{ otherwise}
\end{array}
\right.
\end{align}
where, $b_0,b_1,\ldots,b_{N_p-1}$ is the pseudo-random amplitude
code sequence, $b_i\in\{-1,+1\}$. The pulse time
\begin{align}
t_i[n]=nT_s+c_i
\end{align}
where $T_s$ is the symbol duration, $c_i$ is the relative pulse
timing.

\comment{The relative pulse timeing $c_i$ is related to the
pseudo-random delay hopping code $\{D_i\}$,
\begin{align}
D_i=\left\{
\begin{array}{ll}
T_s+c_0-c_{N_p-1}, & \mbox{ if }i=N_p-1 \\
c_{i+1}-c_i, & \mbox{ otherwise}
\end{array}
\right.
\end{align}}

The pilot signal $s_k^p(t)$ is introduced to facilitate timing
synchronization and partial channel estimation. Guided intervals are
introduced between blocks of information bearing signals and pilot
signals, so that all inter-block-interference is avoided.

\comment{ Denote the received signal corresponding the $k$th
information bearing signal block by $r_k(t)$. In sequel, we will
drop the notation $k$ if no ambiguity occurs.
 The received signal $r(t)$ can be written as,
\begin{align}
r(t)=\sum_{n=0}^{N_b-1}\sum_{i=0}^{N_p-1}a_i[n]g\left(t-t_i[n]\right)+n(t),
\end{align}
where, $g(t)$ is the channel response for the pulse $\bar{w}(t)$,
$n(t)$ is the noise.}

Similarly as in \cite{ma09}, at the receiver side, an
auto-correlation receiver is used. Denote the received signals
corresponding to one block of information bearing signals by $z[m]$,
$m =1,2,\ldots,N_r$. The signal model of the system is a
second-order Volterra model as follows.
\begin{align}
z[m]=(\pmb{r}+\pmb{P}\pmb{d})^t\pmb{Q}^t\pmb{B}[m]\pmb{Q}(\pmb{r}+\pmb{P}\pmb{d})+\mbox{
noise terms,}
\end{align}
where $\pmb{Q}$, $\pmb{P}$, $\pmb{r}$ are constant matrices and
vectors , and $\pmb{B}[n]$ are matrices that depends on the wireless
channel (more detailed definitions can be found in \cite{ma09}). We
assume that the matrices $\pmb{B}[m]$ can be estimated accurately by
using the pilot signals.

\comment{ The definition of the matrices and vectors are summarized
as follows.
\begin{itemize}
\item \begin{align}
I_g(t_1,t_2;\tau)=\int_{t_1}^{t_2}g(t)g(t+\tau)dt
\end{align}

\item Denote the data vector by $\pmb{d}$,
\begin{align}
\pmb{d}=[d_0,d_1,\ldots,d_{N_b-1}]^t.
\end{align}

\item In the above equation, $\pmb{Q}$ is a diagonal matrix,
\begin{align}
\pmb{Q}=\mbox{diag}[1,b_0,b_0b_1,\ldots],
\end{align}
\begin{align}
[\pmb{Q}]_{k,k}=\prod_{j=0}^{k-2}b_{j\mbox{mod}N_p}.
\end{align}

\item The matrix $\pmb{P}$,
\begin{align}
\pmb{P}=\pmb{I}_{N_b}\otimes \pmb{s},
\end{align}
where, $\pmb{I}_{N_b}$ is an identical matrix, and $\pmb{s}$ is a
vector with length $N_p$ of alternating $0$, $1$,
\begin{align}
\pmb{s}=[0,1,0,1,\ldots,1]^T.
\end{align}
The vector $\pmb{r}$ is,
\begin{align}
\pmb{r}=\pmb{i}_{N_b}\otimes (\pmb{i}_{N_p}-\pmb{s}),
\end{align}
where $\pmb{i}_{N_b}$ and $\pmb{i}_{N_p}$ are the all one column
vectors with length $N_b$ and $N_p$ respectively.
\end{itemize}
}

\section{SDP Problem Formulation}
\label{section_problem_formulation}

Similarly as in \cite{ma09}, we reformulate the difficult discrete
optimization problem into a matrix optimization and relax it into an
SDP problem. The SDP formulation in this paper is slightly different
from the one in \cite{ma09}. Instead of introducing auxiliary
variables, we formulate the SDP problem with the following convex
objective function f(\pmb{U}).
\begin{align}
f(\pmb{U})=\sum_{m=1}^{N_r}
\left\{z[m]-\pmb{r}^t\pmb{Q}^t\pmb{B}[m]\pmb{Q}\pmb{r} -
\pmb{r}^t\pmb{Q}^t\pmb{B}[m]\pmb{Q}\pmb{P}\pmb{d}  \right. \nonumber \\
-\left.\pmb{r}^t\pmb{Q}^T\pmb{B}[m]^t\pmb{Q}\pmb{P}\pmb{d}
-tr\left\{\pmb{D}\pmb{P}^t\pmb{Q}^t\pmb{B}[m]\pmb{Q}\pmb{P}\right\}\right\}^2,
\end{align}
where $\pmb{U}$ is a $N_b+1$ by $N_b+1$ positive semi-definite
symmetric matrix, $\pmb{D}$ denote the sub-matrix of $\pmb{U}$
formed by selecting the last $N_b$ rows and columns, and $\pmb{d}$
is a vector $\pmb{d}=[\pmb{U}_{1,2},\ldots,\pmb{U}_{1,N_b+1}]^t$

The convex SDP problem is summarized as follows.
\begin{align}
 & \min f(\pmb{U}) \nonumber \\
  \mbox{subject to: } &  \pmb{U}_{n,n}=1, \mbox{ for
all }n,
\\
&  \pmb{U}\in \pmb{\mathcal S}, \\
&   \pmb{U}\succeq 0.
\end{align}
The demodulation result is obtained from the solution of the above
SDP problem by thresholding. That is, the demodulation result for
the $n$th symbol is obtained as $\mbox{sign}(\pmb{U}_{1,n+1})$.

\section{Approximate Semi-definite Programming Algorithm}
\label{section_approximate_sdp}

In this section, we propose a new approximate algorithm of solving
semi-definite programming. The algorithm is a generalization of
Hazan's algorithm  on approximate semi-definite programming
\cite{hazan}. Hazan's algorithm considers a special class of SDP
optimization problems, where the constraints are total trace
constraints. Such SDP optimization problems usually arise in Quantum
State Tomograph (QST) problems. The algorithm proposed in this paper
considers the class of problems with the constraints that the
diagonal elements of the matrix must be one.

We consider the following SDP optimization problem.
\begin{align}
\label{main_optimization_problem}
 & \min f(\pmb{X}) \nonumber \\
\mbox{subject to: } & \mbox{ diagonal elements of }\pmb{X} \mbox{ are zeros}, \nonumber \\
&  \pmb{X}\mbox{ is symmetric}, \nonumber \\
&   \pmb{X}+\pmb{I}\succeq 0,
\end{align}
where, $\pmb{X}$ is a square matrix, $\pmb{I}$ is the identity
matrix with the same numbers of rows and columns. Without loss of
generality, we assume that $f(\cdot)$ is independent of the diagonal
elements of the matrix $\pmb{X}$. We also assume that $f(\cdot)$ has
a bounded curvature constant $C_f$. The curvature constant $C_f$ is
defined as follows.
\begin{align}
C_f = \sup
\frac{1}{\beta^2}\left[f(\pmb{Y})-f(\pmb{X})+(\pmb{Y}-\pmb{X})^t\bigtriangledown
f(\pmb{X})\right]
\end{align}
where, $\pmb{X}+\pmb{I}\succeq 0, \pmb{Z}+\pmb{I}\succeq 0,
\pmb{Y}=\pmb{X}+\beta(\pmb{Z}-\pmb{X})$, and all diagonal elements
of $\pmb{X}$, $\pmb{Y}$, $\pmb{Z}$ are zeros. Clearly, the
convex-SDP optimization problem in the previous section can be
reduced into the above optimization problem and solved.

Before going into details of the proposed algorithm, we need some
basic facts on matrices. These facts will be presented in Section
\ref{sub_basic_fact}. The dual function of $f(\pmb{X})$ will be
discussed in Section \ref{sub_weak_duality}. The algorithm will be
presented in Section \ref{sub_algorithm}. The correctness and
convergence of the algorithm will be proved in Section
\ref{sub_correctness}. Certain discussions will be presented in
Section \ref{sub_discussion}.

\subsection{Some Basic Facts}
\label{sub_basic_fact}

\begin{lemma}
\label{basic_lemma_one} Let $\pmb{X}$ be a symmetric matrix with all
the diagonal elements being zero. Then $\pmb{I}+\pmb{X}$ is positive
semi-definite, if and only if $\lambda_{\min}(\pmb{X})\geq -1$,
where $\lambda_{\min}(\pmb{X})$ denotes the smallest eigenvalue of
$\pmb{X}$.
\end{lemma}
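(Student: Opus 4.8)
The plan is to reduce the statement to the standard spectral characterization of positive semi-definiteness. Since $\pmb{X}$ is symmetric, the spectral theorem gives an orthogonal matrix $\pmb{V}$ and a real diagonal matrix $\pmb{\Lambda}=\mathrm{diag}(\lambda_1,\ldots,\lambda_N)$, whose entries are the eigenvalues of $\pmb{X}$, such that $\pmb{X}=\pmb{V}\pmb{\Lambda}\pmb{V}^t$. Then $\pmb{I}+\pmb{X}=\pmb{V}(\pmb{I}+\pmb{\Lambda})\pmb{V}^t$, so $\pmb{I}+\pmb{X}$ is orthogonally similar to the diagonal matrix with entries $1+\lambda_i$; in particular its eigenvalues are exactly the numbers $1+\lambda_i$, and it shares the orthonormal eigenbasis (the columns of $\pmb{V}$) with $\pmb{X}$.

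Next I would invoke the elementary fact that a symmetric matrix is positive semi-definite if and only if all of its eigenvalues are nonnegative. Applying this to $\pmb{I}+\pmb{X}$ yields: $\pmb{I}+\pmb{X}\succeq 0$ holds if and only if $1+\lambda_i\geq 0$ for every $i$, i.e. if and only if $\lambda_i\geq -1$ for all $i$, i.e. if and only if $\min_i\lambda_i=\lambda_{\min}(\pmb{X})\geq -1$. This establishes both directions of the equivalence at once.

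There is essentially no obstacle here; the only point worth noting is that the hypothesis that $\pmb{X}$ has zero diagonal plays no role in the argument — only the symmetry of $\pmb{X}$ is used, which is what guarantees a real spectrum and the shared orthonormal eigenbasis. If one prefers to avoid the spectral theorem, the same conclusion follows directly from the definition: for any vector $\pmb{v}$ we have $\pmb{v}^t(\pmb{I}+\pmb{X})\pmb{v}=\|\pmb{v}\|^2+\pmb{v}^t\pmb{X}\pmb{v}\geq \bigl(1+\lambda_{\min}(\pmb{X})\bigr)\|\pmb{v}\|^2$ by the Rayleigh quotient bound $\pmb{v}^t\pmb{X}\pmb{v}\geq \lambda_{\min}(\pmb{X})\|\pmb{v}\|^2$, which is nonnegative whenever $\lambda_{\min}(\pmb{X})\geq -1$; conversely, taking $\pmb{v}$ to be a unit eigenvector of $\pmb{X}$ associated with $\lambda_{\min}(\pmb{X})$ gives $\pmb{v}^t(\pmb{I}+\pmb{X})\pmb{v}=1+\lambda_{\min}(\pmb{X})$, so $\pmb{I}+\pmb{X}\succeq 0$ forces $\lambda_{\min}(\pmb{X})\geq -1$.
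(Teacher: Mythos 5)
Your proof is correct. Your primary route (spectral theorem: $\pmb{X}=\pmb{V}\pmb{\Lambda}\pmb{V}^t$, so the eigenvalues of $\pmb{I}+\pmb{X}$ are exactly $1+\lambda_i$, and a symmetric matrix is positive semi-definite iff its eigenvalues are nonnegative) is a mild repackaging of the paper's argument, which instead works directly with the variational characterization $\lambda_{\min}(\pmb{X})=\min_{\|\pmb{v}\|=1}\pmb{v}^t\pmb{X}\pmb{v}$ and the Rayleigh-quotient bound in both directions; in fact, your closing ``if one prefers to avoid the spectral theorem'' variant is word-for-word the paper's proof. The two are equivalent in substance; the diagonalization phrasing makes the ``eigenvalues shift by one'' fact explicit, while the paper's phrasing avoids invoking the spectral theorem by name. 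Your observation that the zero-diagonal hypothesis is never used is accurate --- only symmetry matters, and the hypothesis is carried along merely because that is the feasible set of the SDP. One minor point worth flagging: the paper's own necessary-condition step opens with ``assume that $\pmb{X}$ is positive semi-definite,'' which is a typo for $\pmb{I}+\pmb{X}$; your version states the hypothesis correctly.
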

\begin{proof}
Necessary condition: assume that $\pmb{X}$ is positive
semi-definite, then
\begin{align}
\lambda_{\min}(\pmb{X}) & =
\min_{||\pmb{v}||=1}\pmb{v}^t\pmb{X}\pmb{v}, \nonumber \\
& = \min_{||\pmb{v}||=1}\pmb{v}^t(\pmb{I}+\pmb{X})\pmb{v} -
\pmb{v}^t\pmb{I}\pmb{v}, \nonumber \\
& \geq  \min_{||\pmb{v}||=1}0 - \pmb{v}^t\pmb{I}\pmb{v} = -1.
\end{align}

Sufficient condition: It is sufficient to show that
$\pmb{v}^t(\pmb{I}+\pmb{X})\pmb{v}\geq 0$ for all $\pmb{v}$ with
$||\pmb{v}||=1$. The above statement follows from the fact that
$\pmb{v}^t\pmb{X}\pmb{v}\geq
\lambda_{\min}(\pmb{X})||\pmb{v}||^2\geq -1$.
\end{proof}

\begin{lemma}
\label{basic_lemma_two} Let $\pmb{X}_1$ $\pmb{X}_2$ be two symmetric
matrices, such that the smallest eigenvalues of the matrices are
greater than $-1$,
\begin{align}
\lambda_{\min}(\pmb{X}_1)\geq -1,\,\,\,\lambda_{\min}(\pmb{X}_2)\geq
-1.
\end{align}
Let $\pmb{X}$ be a linear combination of $\pmb{X}_1$ and
$\pmb{X}_2$. That is $\pmb{X}= \beta \pmb{X}_1+(1-\beta)\pmb{X}_2$,
where $0\leq \beta \leq 1$. Then, the smallest eigenvalue of
$\pmb{X}$ is also greater than $-1$,
\begin{align}
\lambda_{\min}(\pmb{X})\geq -1.
\end{align}
\end{lemma}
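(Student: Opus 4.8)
The plan is to invoke the variational (Rayleigh--Ritz) characterization of the smallest eigenvalue exactly as was done in the proof of Lemma~\ref{basic_lemma_one}, namely
\begin{align}
\lambda_{\min}(\pmb{X}) = \min_{\|\pmb{v}\|=1} \pmb{v}^t \pmb{X} \pmb{v},
\end{align}
and likewise for $\pmb{X}_1$ and $\pmb{X}_2$. The hypothesis $\lambda_{\min}(\pmb{X}_i)\geq -1$ then says precisely that $\pmb{v}^t\pmb{X}_i\pmb{v}\geq -1$ for every unit vector $\pmb{v}$, for $i=1,2$.

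First I would fix an arbitrary unit vector $\pmb{v}$ and use the linearity of the quadratic form in the matrix argument to write $\pmb{v}^t\pmb{X}\pmb{v} = \beta\,\pmb{v}^t\pmb{X}_1\pmb{v} + (1-\beta)\,\pmb{v}^t\pmb{X}_2\pmb{v}$. Since $0\leq\beta\leq 1$, both coefficients $\beta$ and $1-\beta$ are nonnegative and sum to one, so each term is bounded below by $\beta(-1)$ and $(1-\beta)(-1)$ respectively, giving $\pmb{v}^t\pmb{X}\pmb{v}\geq -\beta-(1-\beta) = -1$. Taking the minimum over all unit $\pmb{v}$ yields $\lambda_{\min}(\pmb{X})\geq -1$, which is the claim. (Equivalently, one could phrase this as the statement that $\lambda_{\min}(\cdot)$ is concave on the space of symmetric matrices, being a pointwise infimum of the linear functionals $\pmb{X}\mapsto\pmb{v}^t\pmb{X}\pmb{v}$.)

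There is essentially no hard step here; the only point that genuinely needs the hypothesis $0\leq\beta\leq 1$ — rather than just $\pmb{X}$ being an affine combination — is that nonnegativity of the weights is what preserves the direction of the inequality when we lower-bound each term. If desired, this lemma can alternatively be stated and proved directly in terms of positive semi-definiteness: under the extra assumption that the diagonals of $\pmb{X}_1,\pmb{X}_2$ (hence of $\pmb{X}$) vanish, Lemma~\ref{basic_lemma_one} converts the eigenvalue bounds into $\pmb{I}+\pmb{X}_i\succeq 0$, and then $\pmb{I}+\pmb{X} = \beta(\pmb{I}+\pmb{X}_1)+(1-\beta)(\pmb{I}+\pmb{X}_2)$ is a convex combination of positive semi-definite matrices, hence positive semi-definite; applying Lemma~\ref{basic_lemma_one} again recovers $\lambda_{\min}(\pmb{X})\geq -1$. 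I would present the direct Rayleigh-quotient argument as the main proof since it needs no hypothesis on the diagonals.
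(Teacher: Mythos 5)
Your proof is correct and is essentially identical to the paper's: both use the Rayleigh--Ritz characterization $\lambda_{\min}(\pmb{X})=\min_{\|\pmb{v}\|=1}\pmb{v}^t\pmb{X}\pmb{v}$, expand the quadratic form linearly over the convex combination, and lower-bound each term by $-\beta$ and $-(1-\beta)$. The alternative PSD-based argument you sketch is a fine aside but unnecessary; the main argument matches the paper's proof exactly.
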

\begin{proof}
\begin{align}
\lambda_{\min}(\pmb{X}) & = \min_{||\pmb{v}||=1}
\pmb{v}^t\pmb{X}\pmb{v} \nonumber \\
& = \min_{||\pmb{v}||=1}
\pmb{v}^t(\beta\pmb{X}_1+(1-\beta)\pmb{X}_2)\pmb{v}\nonumber \\
& \geq  \beta(-1)+(1-\beta)(-1) = -1.
\end{align}
\end{proof}

\subsection{Weak Duality}

\label{sub_weak_duality}

The proposed algorithm is based on iteratively reducing the duality
gap between the primal function and its dual function. For a primal
function $f(\pmb{X})$, we define the dual function $w(\pmb{X})$ as
\begin{align}
w(\pmb{X}) & =  \max_{ \bigtriangledown
f(\pmb{X})+\pmb{\lambda}\succeq 0 } w(\pmb{X},\pmb{\lambda})
\nonumber \\
& = \max_{ \bigtriangledown f(\pmb{X})+\pmb{\lambda}\succeq 0 }
f(\pmb{X}) - \pmb{X}\cdot \bigtriangledown f(\pmb{X}) -
tr(\pmb{\lambda}),
\end{align}
where, $\pmb{\lambda}=
\mbox{diag}(\lambda_1,\lambda_2,\ldots,\lambda_n)$ is a diagonal
matrix.

\begin{theorem}(\emph{Weak Duality}) Denote the minimizer of
the optimization problem in Eq. \ref{main_optimization_problem} as
$\pmb{X}^\ast$. Let $\pmb{X}$ be a feasible point.
 Then, the following weak duality inequalities hold.
\begin{align}
f(\pmb{X})\geq f(\pmb{X}^\ast)\geq w(\pmb{X})
\end{align}
\end{theorem}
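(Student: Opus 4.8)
The plan is to dispatch the first inequality by definition and to establish the second one — the genuine weak-duality statement — by combining the convexity of $f$ with the fact that the inner product of two positive semi-definite matrices is non-negative. The first inequality $f(\pmb{X})\geq f(\pmb{X}^\ast)$ is immediate: $\pmb{X}$ is feasible and $\pmb{X}^\ast$ is by definition the minimizer over all feasible points.

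For the second inequality I would fix an arbitrary feasible $\pmb{X}$ and an arbitrary diagonal matrix $\pmb{\lambda}$ with $\bigtriangledown f(\pmb{X})+\pmb{\lambda}\succeq 0$, and show $f(\pmb{X}^\ast)\geq w(\pmb{X},\pmb{\lambda})=f(\pmb{X})-\pmb{X}\cdot\bigtriangledown f(\pmb{X})-tr(\pmb{\lambda})$; taking the maximum over all such $\pmb{\lambda}$ then yields $f(\pmb{X}^\ast)\geq w(\pmb{X})$. Convexity of $f$ gives the supporting hyperplane inequality $f(\pmb{X}^\ast)\geq f(\pmb{X})+(\pmb{X}^\ast-\pmb{X})\cdot\bigtriangledown f(\pmb{X})$, so comparing with the expression for $w(\pmb{X},\pmb{\lambda})$ reduces the claim to the single inequality $\pmb{X}^\ast\cdot\bigtriangledown f(\pmb{X})+tr(\pmb{\lambda})\geq 0$. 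To get this I would apply the elementary fact that $\pmb{A}\cdot\pmb{B}\geq 0$ whenever $\pmb{A},\pmb{B}\succeq 0$ (write $\pmb{A}=\pmb{M}\pmb{M}^t$ and use $tr(\pmb{M}\pmb{M}^t\pmb{B})=tr(\pmb{M}^t\pmb{B}\pmb{M})\geq 0$) to the pair $\pmb{X}^\ast+\pmb{I}\succeq 0$ and $\bigtriangledown f(\pmb{X})+\pmb{\lambda}\succeq 0$. Expanding $(\pmb{X}^\ast+\pmb{I})\cdot(\bigtriangledown f(\pmb{X})+\pmb{\lambda})\geq 0$ into four terms, the cross term $\pmb{X}^\ast\cdot\pmb{\lambda}$ vanishes because $\pmb{\lambda}$ is diagonal while $\pmb{X}^\ast$ has zero diagonal, and $\pmb{I}\cdot\bigtriangledown f(\pmb{X})=tr(\bigtriangledown f(\pmb{X}))$ vanishes because $f$ is assumed independent of the diagonal entries, so its partial derivatives in the diagonal positions are zero. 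What survives is exactly $\pmb{X}^\ast\cdot\bigtriangledown f(\pmb{X})+tr(\pmb{\lambda})\geq 0$, which completes the argument.

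I do not expect a real obstacle; the proof is short. The one point needing care is the decision to test positivity against $\pmb{X}^\ast+\pmb{I}$ rather than $\pmb{X}^\ast$ directly — this is precisely what makes the $tr(\pmb{\lambda})$ term appear with the correct sign — together with bookkeeping the two structural cancellations (zero diagonal of feasible matrices, and independence of $f$ from the diagonal). I would also note that symmetry of $\pmb{X}^\ast$, $\pmb{\lambda}$, and $\bigtriangledown f(\pmb{X})$ is used implicitly so that the trace manipulations and the identification of $\pmb{A}\cdot\pmb{B}$ with an ordinary matrix inner product are valid.
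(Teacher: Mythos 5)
Your proof is correct, and it takes a genuinely different and more direct route than the paper. The paper proceeds through the formal Lagrangian $f(\pmb{X})-\pmb{V}\cdot(\pmb{I}+\pmb{X})+\pmb{\lambda}\cdot\pmb{X}$, invokes the max--min inequality to lower-bound $f(\pmb{X}^\ast)$ by the dual, and then chooses particular multipliers $\pmb{V}_0=\bigtriangledown f(\pmb{X}_0)+\pmb{\lambda}_0\succeq 0$ so that the stationarity condition makes $\pmb{X}_0$ the minimizer of the inner (convex) problem; evaluating there recovers $w(\pmb{X}_0,\pmb{\lambda}_0)$. You instead verify $f(\pmb{X}^\ast)\geq w(\pmb{X},\pmb{\lambda})$ in one shot: the gradient inequality $f(\pmb{X}^\ast)\geq f(\pmb{X})+(\pmb{X}^\ast-\pmb{X})\cdot\bigtriangledown f(\pmb{X})$ reduces the claim to $\pmb{X}^\ast\cdot\bigtriangledown f(\pmb{X})+tr(\pmb{\lambda})\geq 0$, which you extract from $(\pmb{X}^\ast+\pmb{I})\cdot(\bigtriangledown f(\pmb{X})+\pmb{\lambda})\geq 0$ after the two cancellations ($\pmb{X}^\ast\cdot\pmb{\lambda}=0$ since $\pmb{X}^\ast$ has zero diagonal, and $\pmb{I}\cdot\bigtriangledown f(\pmb{X})=0$ by the paper's standing assumption that $f$ is independent of the diagonal). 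This is exactly the same set of hypotheses the paper uses (convexity of $f$, the structure of the feasible set, and the zero-diagonal gradient), but your packaging avoids the indicator-function/min--max machinery and the slightly delicate step (a) of the paper's proof, where one must justify that $\pmb{X}_0$ globally minimizes the Lagrangian (this requires convexity, which the paper uses implicitly); it also sidesteps a typo in the paper's final display. What the paper's route buys is the explicit identification of $w$ as the standard Lagrangian dual restricted to a particular slice of multipliers, which makes the later duality-gap interpretation $g(\pmb{X})=f(\pmb{X})-w(\pmb{X})$ more transparent. Both are valid; yours is shorter and more self-contained.
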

\begin{proof}
Given a function $f(\pmb{X})$, the corresponding Lagrangian function
can be written as
\begin{align}
f(\pmb{X})-\pmb{V}\cdot (\pmb{I}+\pmb{X})+\pmb{\lambda}\cdot \pmb{X}
\end{align}
where $\pmb{V}$ is a symmetric positive semi-definite matrix.

We can rewrite the function $f(\pmb{X}^\ast)$ in a min-max form as
follows.
\begin{align}
& f(\pmb{X}^\ast) = \min_{\pmb{X}\succeq 0, \pmb{X}\in \mathcal{S}}
f(\pmb{X})
\nonumber \\
&= \min_{\pmb{X}\in \mathcal{S}}\left[ \max_{\pmb{V}\succeq 0,
\pmb{\lambda}}\left[ f(\pmb{X})-\pmb{V}\cdot
(\pmb{I}+\pmb{X})+\pmb{\lambda}\cdot \pmb{X}\right]\right]
\end{align}
This is because
\begin{align}
&  \max_{\pmb{V}\succeq 0, \pmb{\lambda}}\left[
f(\pmb{X})-\pmb{V}\cdot (\pmb{I}+\pmb{X})+\pmb{\lambda}\cdot
\pmb{X}\right]
\nonumber \\
& = \left\{
\begin{array}{ll}
f(\pmb{X}), & \mbox{ if }\pmb{X}\succeq 0 \mbox{ and}\\
            & \mbox{ diagonal elements of }\pmb{X}\mbox{ are zeros}     \\
+\infty,    & \mbox{ otherwise }
\end{array}
\right.
\end{align}

By the max-min inequality (see for example, \cite{boyd} page 238,
Eq. 5.47), we can lower bound $f(\pmb{X}^\ast)$ as follows.
\begin{align}
& f(\pmb{X}^\ast) =  \min_{\pmb{X}\in \mathcal{S}}
\left[\max_{\pmb{V}\succeq 0, \pmb{\lambda}}\left[
f(\pmb{X})-\pmb{V}\cdot (\pmb{I}+\pmb{X})+\pmb{\lambda}\cdot
\pmb{X}\right]\right]
\nonumber \\
& \geq \max_{\pmb{V}\succeq 0, \pmb{\lambda}} \left[\min_{\pmb{X}\in
\mathcal{S}} \left[ f(\pmb{X})-\pmb{V}\cdot
(\pmb{I}+\pmb{X})+\pmb{\lambda}\cdot \pmb{X}\right]\right]
\end{align}

Let us assume that $\pmb{V}_0$ and $\pmb{\lambda}_0$ are symmetric
and diagonal matrix respectively, such that the following equations
hold for a feasible point $\pmb{X}_0$.
\begin{align}
\bigtriangledown f(\pmb{X}_0)-\pmb{V}_0+\pmb{\lambda}_0=\pmb{0},
\\
\bigtriangledown f(\pmb{X}_0)+\pmb{\lambda}_0 \succeq 0.
\end{align}

By the above discussions, we have
\begin{align}
&  f(\pmb{X}^\ast) \nonumber \\
& \geq \max_{\pmb{V}\succeq 0, \pmb{\lambda}}\left[ \min_{\pmb{X}\in
\mathcal{S}} \left[ f(\pmb{X})-\pmb{V}\cdot
(\pmb{I}+\pmb{X})+\pmb{\lambda}\cdot \pmb{X}\right]\right],
\nonumber \\
& \geq  \min_{\pmb{X}\in \mathcal{S}} \left[
f(\pmb{X})-\pmb{V}_0\cdot (\pmb{I}+\pmb{X})+\pmb{\lambda}_0\cdot
\pmb{X}\right],
\nonumber \\
& \stackrel{(a)}{=}   f(\pmb{X}_0)-\pmb{V}_0\cdot
(\pmb{I}+\pmb{X}_0)+\pmb{\lambda}_0\cdot \pmb{X}_0, \nonumber \\
& \stackrel{(b)}{=}   f(\pmb{X}_0)-(\bigtriangledown
f(\pmb{X}_0)+\pmb{\lambda}_0)\cdot
(\pmb{I}+\pmb{X}_0)+\pmb{\lambda}_0\cdot \pmb{X}_0, \nonumber \\
& =   f(\pmb{X}_0)-\bigtriangledown f(\pmb{X}_0)\cdot \pmb{X} -
\pmb{\lambda}_0\cdot \pmb{I},
\end{align}
where, (a) follows from the fact that $\pmb{X}_0$ is exactly the
minimizer, and (b) follows from the definition of $\pmb{V}_0$.
Therefore,
\begin{align}
f(\pmb{X}^\ast) & \geq \max_{ \bigtriangledown
f(\pmb{X}_0)+\pmb{\lambda}\succeq 0 } f(\pmb{X}_0) - \pmb{X}_0\cdot
\bigtriangledown f(\pmb{X}_0) - tr(\pmb{\lambda}), \nonumber \\
& \geq w(\pmb{X}_0).
\end{align}
The theorem follows from the fact that $\pmb{X}_0$, $\pmb{V}_0$, and
$\pmb{\lambda}_0$ are arbitrary.
\end{proof}

The above weak duality theorem provides a way to estimate how far a
feasible point $\pmb{X}$ is away from the optimal solution. We
define
\begin{align}
h(\pmb{X}) = f(\pmb{X})-f(\pmb{X}^\ast), \\
g(\pmb{X}) = f(\pmb{X})-w(\pmb{X}).
\end{align}
By the weak duality theorem, we have $h(\pmb{X})\leq g(\pmb{X})$.

In order to evaluate the dual function $w(\pmb{X})$, the following
optimization problem needs to be solved.
\begin{align}
\label{dual_maximization_one}
& \min \sum_i \lambda_i \nonumber \\
& \mbox{subject to } \gamma_i\geq 0, \mbox{ for all }i
\end{align}
where $\gamma_i$ is the $i$th eigenvalue of the matrix
$\bigtriangledown f(\pmb{X})+\pmb{\lambda}$.

\begin{lemma}
\label{gradient_calculation_lemma} Let $\gamma_i$ denote the $i$th
eigenvalue of the matrix $\bigtriangledown
f(\pmb{X})+\pmb{\lambda}$. Let $\pmb{v}_i$ denote the corresponding
eigenvectors. Then,
\begin{align}
\Delta \gamma_i = \sum_j (v_{ij})^2 \Delta \lambda_i
\end{align}
where, $\Delta \gamma_i$ and $\Delta \lambda_i$ are the
infinitesimal differences, $v_{ij}$ denotes the $j$th element of the
vector $\pmb{v}_i$.
\end{lemma}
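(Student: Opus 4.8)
The plan is to treat this as the standard first-order eigenvalue perturbation argument applied to the symmetric matrix $\pmb{A}=\bigtriangledown f(\pmb{X})+\pmb{\lambda}$, viewed as a function of the diagonal entries $\lambda_1,\dots,\lambda_n$ only (the term $\bigtriangledown f(\pmb{X})$ is held fixed). First I would record the two defining relations for the $i$th eigenpair, namely $\pmb{A}\pmb{v}_i=\gamma_i\pmb{v}_i$ together with the normalization $\pmb{v}_i^t\pmb{v}_i=1$, and note that $\pmb{A}$ is symmetric: $\bigtriangledown f(\pmb{X})$ is a gradient with respect to a symmetric-matrix variable, and $\pmb{\lambda}=\mbox{diag}(\lambda_1,\dots,\lambda_n)$ is diagonal, so $\pmb{v}_i^t\pmb{A}=\gamma_i\pmb{v}_i^t$ as well.

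Next I would perturb $\pmb{\lambda}\mapsto\pmb{\lambda}+\Delta\pmb{\lambda}$ with $\Delta\pmb{\lambda}=\mbox{diag}(\Delta\lambda_1,\dots,\Delta\lambda_n)$, which sends $\pmb{A}\mapsto\pmb{A}+\Delta\pmb{\lambda}$, $\gamma_i\mapsto\gamma_i+\Delta\gamma_i$ and $\pmb{v}_i\mapsto\pmb{v}_i+\Delta\pmb{v}_i$. Substituting into $\pmb{A}\pmb{v}_i=\gamma_i\pmb{v}_i$ and discarding the second-order term $\Delta\pmb{\lambda}\,\Delta\pmb{v}_i$ yields the first-order identity $\pmb{A}\,\Delta\pmb{v}_i+\Delta\pmb{\lambda}\,\pmb{v}_i=\gamma_i\,\Delta\pmb{v}_i+\Delta\gamma_i\,\pmb{v}_i$. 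I would then left-multiply by $\pmb{v}_i^t$ and use $\pmb{v}_i^t\pmb{A}=\gamma_i\pmb{v}_i^t$: the two terms containing $\Delta\pmb{v}_i$, namely $\pmb{v}_i^t\pmb{A}\,\Delta\pmb{v}_i$ and $\gamma_i\pmb{v}_i^t\Delta\pmb{v}_i$, cancel, and with $\pmb{v}_i^t\pmb{v}_i=1$ this leaves $\Delta\gamma_i=\pmb{v}_i^t\,\Delta\pmb{\lambda}\,\pmb{v}_i$. Since $\Delta\pmb{\lambda}$ is diagonal, expanding the right-hand side componentwise gives $\Delta\gamma_i=\sum_j v_{ij}^2\,\Delta\lambda_j$, which is the asserted formula (the subscript on $\Delta\lambda$ in the statement being a typo for $j$).

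The only point that needs a word of justification is the regularity implicit in writing "$\Delta\gamma_i$'' and "$\Delta\pmb{v}_i$'': the eigenvalue $\gamma_i$ and an associated unit eigenvector depend differentiably on $\pmb{\lambda}$ whenever $\gamma_i$ is a simple eigenvalue, which is the generic case and the one the statement tacitly assumes; in the degenerate case one restricts to the eigenspace of $\gamma_i$ and the same computation shows the directional derivatives of the perturbed eigenvalues are the eigenvalues of the compression of $\Delta\pmb{\lambda}$ to that subspace, which reproduces the stated formula once the eigenbasis is chosen to diagonalize that compression. I expect the bookkeeping to be entirely routine; the only thing to be careful about is that the first-order normalization term $\pmb{v}_i^t\Delta\pmb{v}_i$ genuinely drops out, and it does so precisely because of the symmetry of $\pmb{A}$.
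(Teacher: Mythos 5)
Your proof is correct and takes essentially the same approach as the paper: both are first-order perturbations of the eigendecomposition of $\bigtriangledown f(\pmb{X})+\pmb{\lambda}$ in which the eigenvector-perturbation terms drop out, leaving $\Delta\gamma_i=\pmb{v}_i^t\,\Delta\pmb{\lambda}\,\pmb{v}_i=\sum_j v_{ij}^2\,\Delta\lambda_j$. The only cosmetic difference is that the paper works with the full decomposition $\pmb{V}\pmb{\Lambda}\pmb{V}^t$ and eliminates the $\Delta\pmb{V}$ terms via the zero diagonal of $\pmb{V}^t\Delta\pmb{V}$ coming from unitarity, whereas you argue one eigenpair at a time and cancel them using $\pmb{v}_i^t\pmb{A}=\gamma_i\pmb{v}_i^t$; you are also right that the $\Delta\lambda_i$ in the statement is a typo for $\Delta\lambda_j$, as the paper's own proof confirms.
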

\begin{proof}
It is clear that there exists a decomposition of $\bigtriangledown
f(\pmb{X})+\pmb{\lambda}$,
\begin{align}
\bigtriangledown f(\pmb{X})+\pmb{\lambda} =
\pmb{V}\pmb{\Lambda}\pmb{V}^t
\end{align}
such that $\pmb{V}$ is a unitary matrix and $\Lambda$ is a diagonal
matrix. In fact, $\pmb{V}=[\pmb{v}_1,\pmb{v}_2,\ldots,\pmb{v}_n]$
and $\Lambda = \mbox{diag}(\gamma_i)$ is a such decomposition.

Let $\Delta \pmb{V}$ and $\Delta \pmb{\Lambda}$ be the corresponding
infinitesimal differences of  $\pmb{V}$ and $\pmb{\Lambda}$
respectively. Then, we have
\begin{align}
\label{diff_trick_eq1} (\pmb{V}+\Delta \pmb{V})
(\pmb{\Lambda}+\Delta \pmb{\Lambda}) (\pmb{V}^t+\Delta \pmb{V}^t) =
\triangledown f(\pmb{X}) + \pmb{\lambda} + \Delta \pmb{\lambda},
\end{align}
\begin{align}
\label{diff_trick_eq2} (\pmb{V}^t+\Delta \pmb{V}^t)(\pmb{V}+\Delta
\pmb{V})
 = \pmb{I}.
\end{align}
From Eq. \ref{diff_trick_eq2} and the fact that $\pmb{V}$ is
unitary, we have
\begin{align}
\pmb{V}^t \Delta \pmb{V} +\Delta \pmb{V}^t \pmb{V}=0.
\end{align}
Since $\pmb{V}^t\Delta \pmb{V}$ and $\Delta \pmb{V}^t \pmb{V}$ are
the transpose of each other, we conclude that the matrices
$\pmb{V}^t\Delta \pmb{V}$ and $\Delta \pmb{V}^t \pmb{V}$ are
anti-symmetric and their diagonal elements are all zeros.

From Eq. \ref{diff_trick_eq1}, we have
\begin{align}
\pmb{V}\Delta \pmb{\Lambda} \pmb{V}^t + \Delta\pmb{V} \pmb{\Lambda}
\pmb{V}^t + \pmb{V} \pmb{\Lambda} \Delta\pmb{V}^t = \Delta
\pmb{\lambda}.
\end{align}
Multiplying the above equation by the matrix $\pmb{V}^t$ at the left
side and the matrix $\pmb{V}$ at the right side, we obtain
\begin{align}
\Delta \pmb{\Lambda}  + \pmb{V}^t\Delta\pmb{V} \pmb{\Lambda} +
\pmb{\Lambda} \Delta\pmb{V}^t \pmb{V}= \pmb{V}^t \Delta
\pmb{\lambda} \pmb{V}.
\end{align}
Since the diagonal elements of the matrices $\pmb{V}^t\Delta
\pmb{V}$ and $\Delta \pmb{V}^t \pmb{V}$ are all zeros, the diagonal
elements of the matrices $\pmb{V}^t\Delta\pmb{V} \pmb{\Lambda}$ and
$\pmb{\Lambda} \Delta\pmb{V}^t \pmb{V}$ are also zeros. Therefore,
we conclude that the diagonal elements of $\Delta \pmb{\Lambda}$ and
$\pmb{V}^t \Delta \pmb{\lambda} \pmb{V}$ are identical. The theorem
then follows from the fact that the $i$th diagonal element of the
matrix $\pmb{V}^t \Delta \pmb{\lambda} \pmb{V}$ is $\sum_j
({v}_{ij})^2 \Delta \lambda_j$.
\end{proof}

\begin{lemma}
\label{lemma_def_y} In the optimization problem in Eq.
\ref{dual_maximization_one}. Let $\pmb{\lambda}^\ast$ be the
minimizer. Let $\gamma_i$ denote the $i$th eigenvalue of the matrix
$\bigtriangledown f(\pmb{X})+\pmb{\lambda}^\ast$. Let $\pmb{v}_i$
denote the corresponding eigenvectors. Then, there exist a set
${\mathcal T}\subset \{1,2,\ldots,n\}$ and a vector $\pmb{y}$, such
that
\begin{align}
\label{simple_sdp_char_one} \pmb{y}\geq 0,
\end{align}
\begin{align}
\label{simple_sdp_char_two} \pmb{V}^t \pmb{y}=[1,\ldots,1]^t,
\end{align}
\begin{align}
\label{simple_sdp_char_three} \pmb{v}_i^t\left(\bigtriangledown
f(\pmb{X}) + \pmb{\lambda}^\ast\right)\pmb{v}_i=0, \mbox{ for all
}i\in {\mathcal T},
\end{align}
where $n$ is the number of rows of matrix $\pmb{\lambda}$, $\pmb{V}$
is a matrix such that each row of $\pmb{V}$ is $[v_{ij}^2]$ for one
$i\in {\mathcal T}$. That is,
\begin{align}
\pmb{V} = \left[ \begin{array}{cccc}
v_{i_1,1}^2 & v_{i_2,2}^2 & \ldots & v_{i_1,n}^2 \\
\ldots      & \ldots      & \ldots &  \ldots     \\
v_{i_k,1}^2 & v_{i_k,2}^2 & \ldots & v_{i_k,n}^2
\end{array}
\right],
\end{align}
where $i_1,\ldots, i_k\in{\mathcal T}$.
\end{lemma}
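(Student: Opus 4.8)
The plan is to recognize the optimization problem in Eq.~\ref{dual_maximization_one} as a convex semi-definite program and to read off the three stated relations from its Karush--Kuhn--Tucker (KKT) optimality conditions at the minimizer $\pmb{\lambda}^\ast$, using the eigenvalue perturbation identity of Lemma~\ref{gradient_calculation_lemma} to express the gradients of the constraint functions. By Lemma~\ref{basic_lemma_one} (or directly from the definition of eigenvalues) the constraints $\gamma_i\geq 0$ for all $i$ are equivalent to the single linear matrix inequality $\bigtriangledown f(\pmb{X})+\pmb{\lambda}\succeq 0$, so the problem is ``$\min\sum_j\lambda_j$ subject to $\bigtriangledown f(\pmb{X})+\mbox{diag}(\lambda_1,\ldots,\lambda_n)\succeq 0$'', a convex program in the variables $\lambda_1,\ldots,\lambda_n$. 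Since any vector $\pmb{\lambda}$ with all $\lambda_j$ sufficiently large is strictly feasible, Slater's condition holds and the KKT conditions are necessary at $\pmb{\lambda}^\ast$.

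I would then carry out the KKT bookkeeping. Let $\mathcal{T}_0=\{i:\gamma_i=0\}$ be the active set at $\pmb{\lambda}^\ast$; the multipliers of the inactive constraints vanish by complementary slackness. By Lemma~\ref{gradient_calculation_lemma}, the gradient of $\gamma_i$ with respect to $(\lambda_1,\ldots,\lambda_n)$ is the vector $[v_{i1}^2,\ldots,v_{in}^2]^t$, so stationarity of the Lagrangian reads
\begin{align}
[1,1,\ldots,1]^t \;=\; \sum_{i\in\mathcal{T}_0}\mu_i\,[v_{i1}^2,\ldots,v_{in}^2]^t,\qquad \mu_i\geq 0 .
\end{align}
Setting $\mathcal{T}=\{i\in\mathcal{T}_0:\mu_i>0\}$ and letting $\pmb{y}$ collect the entries $\mu_i$ for $i\in\mathcal{T}$, one obtains $\pmb{y}\geq 0$, which is Eq.~\ref{simple_sdp_char_one}; the coordinate-wise identity $\sum_{i\in\mathcal{T}}\mu_i v_{ij}^2=1$ for every $j$, which is exactly $\pmb{V}^t\pmb{y}=[1,\ldots,1]^t$, i.e.\ Eq.~\ref{simple_sdp_char_two}; and $\gamma_i=0$ for $i\in\mathcal{T}\subseteq\mathcal{T}_0$, i.e.\ $\pmb{v}_i^t(\bigtriangledown f(\pmb{X})+\pmb{\lambda}^\ast)\pmb{v}_i=\gamma_i\|\pmb{v}_i\|^2=0$, which is Eq.~\ref{simple_sdp_char_three}.

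The hard part will be justifying the step above, because $\gamma_i$ is differentiable at $\pmb{\lambda}^\ast$ only when it is a simple eigenvalue, and when the zero eigenvalue of $\pmb{M}^\ast:=\bigtriangledown f(\pmb{X})+\pmb{\lambda}^\ast$ is repeated the eigenvectors $\pmb{v}_i$ are not even uniquely determined; the naive KKT reasoning (and Lemma~\ref{gradient_calculation_lemma} itself) then needs shoring up. I would handle this by passing through semi-definite duality: the dual program is $\max\{-\pmb{Y}\cdot\bigtriangledown f(\pmb{X}):\pmb{Y}\succeq 0,\ \pmb{Y}_{j,j}=1\ \forall j\}$, strong duality and dual attainment hold by Slater, and complementary slackness forces the range of the optimal dual matrix $\pmb{Y}^\ast$ into the null space of $\pmb{M}^\ast$, which is spanned by the zero-eigenvectors. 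Diagonalizing $\pmb{Y}^\ast$ in an orthonormal eigenbasis of $\ker\pmb{M}^\ast$ writes $\pmb{Y}^\ast=\sum_{i\in\mathcal{T}}\mu_i\,\pmb{v}_i\pmb{v}_i^t$ with $\mu_i>0$ and the $\pmb{v}_i$ orthonormal null-vectors of $\pmb{M}^\ast$; the constraint $\pmb{Y}^\ast_{j,j}=1$ becomes precisely $\sum_{i\in\mathcal{T}}\mu_i v_{ij}^2=1$. This simultaneously pins down the legitimate choice of eigenvectors $\pmb{v}_i$ referred to in the statement and delivers $\pmb{y}=(\mu_i)_{i\in\mathcal{T}}$ with all three required properties, with Lemma~\ref{gradient_calculation_lemma} serving merely as the bridge that identifies this dual construction with the formal KKT multipliers.
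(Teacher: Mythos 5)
Your proposal is correct, and its first half is essentially the paper's own argument: identify the active set at $\pmb{\lambda}^\ast$, invoke the KKT conditions to obtain a non-negative multiplier vector $\pmb{y}$ with $\sum_i y_i \bigtriangledown \gamma_i = [1,\ldots,1]^t$, and use Lemma \ref{gradient_calculation_lemma} to read $\bigtriangledown\gamma_i$ as $[v_{i1}^2,\ldots,v_{in}^2]^t$, which gives Eqs. \ref{simple_sdp_char_one}--\ref{simple_sdp_char_three} exactly as you describe. (The paper keeps all active indices in ${\mathcal T}$ rather than only those with strictly positive multipliers; both choices satisfy the stated conclusions.)

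Where you genuinely depart from the paper is in the second half. The paper stops at the formal KKT step, citing a textbook theorem, and does not address the fact that $\gamma_i$ fails to be differentiable where eigenvalues coalesce and that the eigenvectors $\pmb{v}_i$ are then not uniquely determined --- a real soft spot, since Lemma \ref{gradient_calculation_lemma} implicitly assumes a smooth eigendecomposition. Your detour through conic duality closes that gap: the dual of the problem in Eq. \ref{dual_maximization_one} is indeed $\max\{-\pmb{Y}\cdot\bigtriangledown f(\pmb{X}) : \pmb{Y}\succeq 0,\ \pmb{Y}_{j,j}=1\}$, whose feasible set is compact so the maximum is attained; Slater's condition (take all $\lambda_j$ large) gives a zero duality gap; and complementary slackness $\pmb{Y}^\ast\cdot(\bigtriangledown f(\pmb{X})+\pmb{\lambda}^\ast)=0$ between two positive semi-definite matrices forces the range of $\pmb{Y}^\ast$ into the null space of $\bigtriangledown f(\pmb{X})+\pmb{\lambda}^\ast$. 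Writing $\pmb{Y}^\ast=\sum_{i\in{\mathcal T}}\mu_i\pmb{v}_i\pmb{v}_i^t$ with orthonormal null-vectors $\pmb{v}_i$ then yields Eq. \ref{simple_sdp_char_three} directly, and the diagonal constraint $\pmb{Y}^\ast_{j,j}=1$ is exactly Eq. \ref{simple_sdp_char_two}. This buys a proof that is valid without any simplicity assumption on the zero eigenvalue and that simultaneously specifies which choice of eigenvectors the lemma is about; the paper's route is shorter but rests on a differentiability assumption it never verifies. Your version is the more defensible one.
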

\begin{proof}
Due to the nature of the optimization problem, there exist at least
one active constraint at the minimizer. We say that an inequality
constraint is active at a feasible point, if the inequality
constraint holds with equality. In this optimization problem, the
$i$th inequality constraint is active, if $\gamma_i=0$. Let
$\mathcal T$ denote the set of indexes of all active constraints.
Then, for all $i\in {\mathcal T}$, $\gamma_i=0$,
\begin{align}
\pmb{v}_i^t\left(\bigtriangledown f(\pmb{X}) +
\pmb{\lambda}^\ast\right)\pmb{v}_i= 0.
\end{align}

Due to the Karush-Kuhn-Tucker (KKT) Theorem ( see \cite{cong01}
Theorem 20.1 .  Page 398), there exists a vector $\pmb{y}$ such that
\begin{align}
\pmb{y}\geq 0,
\end{align}
\begin{align}
\label{temp_64} \sum_i y_i \bigtriangledown \gamma_i =
\bigtriangledown { \sum_i \lambda_i} = [1,1,\ldots,1]^t.
\end{align}
According to Lemma \ref{gradient_calculation_lemma},
$\bigtriangledown \gamma_i = [v_{i1}^2,v_{i2}^2, \ldots,
v_{in}^2]^t$. Therefore
\begin{align}
\label{temp_64} \sum_i y_i \bigtriangledown \gamma_i = \pmb{V}^t
\pmb{y}
\end{align}
The lemma follows.
\end{proof}

\begin{corollary}
For $i\in {\mathcal T}$, define $ \alpha_i =
\pmb{v}_i^t\bigtriangledown f(\pmb{X})\pmb{v}_i $. Then,
\begin{align}
\sum_i \lambda_i^\ast  = -\sum_{i \in {\mathcal T}} y_{i}\alpha_i
\end{align}
\end{corollary}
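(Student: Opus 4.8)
The plan is to combine the active–constraint characterization of the optimal multiplier $\pmb{\lambda}^\ast$ from Lemma~\ref{lemma_def_y} with the fact that $\pmb{\lambda}^\ast$ is a diagonal matrix. First I would rewrite $\alpha_i$ using Eq.~\ref{simple_sdp_char_three}: for every $i\in{\mathcal T}$ we have $\pmb{v}_i^t(\bigtriangledown f(\pmb{X})+\pmb{\lambda}^\ast)\pmb{v}_i=0$, hence
\begin{align}
\alpha_i = \pmb{v}_i^t\bigtriangledown f(\pmb{X})\pmb{v}_i = -\pmb{v}_i^t\pmb{\lambda}^\ast\pmb{v}_i .
\end{align}
Since $\pmb{\lambda}^\ast=\mbox{diag}(\lambda_1^\ast,\ldots,\lambda_n^\ast)$ and $\pmb{v}_i=[v_{i1},\ldots,v_{in}]^t$, expanding the quadratic form gives $\pmb{v}_i^t\pmb{\lambda}^\ast\pmb{v}_i=\sum_j\lambda_j^\ast v_{ij}^2$, so that $\alpha_i=-\sum_j\lambda_j^\ast v_{ij}^2$ for each $i\in{\mathcal T}$.

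Next I would substitute this expression into $-\sum_{i\in{\mathcal T}}y_i\alpha_i$ and interchange the order of the (finite) summations:
\begin{align}
-\sum_{i\in{\mathcal T}}y_i\alpha_i = \sum_{i\in{\mathcal T}}y_i\sum_j\lambda_j^\ast v_{ij}^2 = \sum_j\lambda_j^\ast\left(\sum_{i\in{\mathcal T}}y_i v_{ij}^2\right).
\end{align}
The inner sum $\sum_{i\in{\mathcal T}}y_i v_{ij}^2$ is precisely the $j$-th component of the vector $\pmb{V}^t\pmb{y}$, where $\pmb{V}$ is the matrix whose rows are $[v_{i1}^2,\ldots,v_{in}^2]$ for $i\in{\mathcal T}$, as defined in Lemma~\ref{lemma_def_y}. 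By Eq.~\ref{simple_sdp_char_two} we have $\pmb{V}^t\pmb{y}=[1,\ldots,1]^t$, so this inner sum equals $1$ for every $j$. Therefore $-\sum_{i\in{\mathcal T}}y_i\alpha_i=\sum_j\lambda_j^\ast$, which is the claimed identity.

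No step here is genuinely hard; the corollary is essentially a one-line consequence of Lemma~\ref{lemma_def_y}. The only thing requiring care is the index bookkeeping: keeping the sums over $i$ restricted to the active set ${\mathcal T}$ (equivalently, using complementary slackness so that $y_i=0$ for $i\notin{\mathcal T}$), and making sure that the matrix $\pmb{V}$ appearing in $\pmb{V}^t\pmb{y}$ is the one built from the \emph{squared} eigenvector entries, not the orthogonal eigenvector matrix used in Lemma~\ref{gradient_calculation_lemma}. Everything else follows from the diagonality of $\pmb{\lambda}^\ast$ together with the normalization $\pmb{V}^t\pmb{y}=[1,\ldots,1]^t$.
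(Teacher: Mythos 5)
Your proof is correct and is essentially the paper's own argument run in the opposite direction: the paper starts from $\sum_i\lambda_i^\ast=[1,\ldots,1]\,\mbox{diag}(\pmb{\lambda}^\ast)=\pmb{y}^t\pmb{V}\,\mbox{diag}(\pmb{\lambda}^\ast)$ and then invokes Eq.~\ref{simple_sdp_char_three}, whereas you start from $\alpha_i=-\pmb{v}_i^t\pmb{\lambda}^\ast\pmb{v}_i$ and interchange the sums to reach $\pmb{V}^t\pmb{y}=[1,\ldots,1]^t$. The two ingredients (Eqs.~\ref{simple_sdp_char_two} and \ref{simple_sdp_char_three}) and the bookkeeping are identical, so no further comparison is needed.
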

\begin{proof}
\begin{align}
\sum_i \lambda_i^\ast & = [1,\ldots,1]\mbox{diag}(\pmb{\lambda}^\ast) = \pmb{y}^t \pmb{V} (\mbox{diag}(\pmb{\lambda}^\ast)) \nonumber \\
& = \pmb{y}^t \left[\ldots, \sum_j (v_{ij})^2\lambda_j, \ldots
\right]^t \nonumber \\
& = \pmb{y}^t \left[\ldots, \pmb{v_i}^t\pmb{\lambda}^\ast\pmb{v}_i,
\ldots
\right]^t \nonumber \\
& \stackrel{(a)}{=} \pmb{y}^t \left[\ldots, -\alpha_i, \ldots
\right]^t \nonumber \\
& = -\sum_{i\in {\mathcal T}} y_i\alpha_i,
\end{align}
where, $\mbox{diag}(\pmb{\lambda}^\ast)$ denote the column vector
that consists of diagonal elements of $\pmb{\lambda}^\ast$, and (a)
follows from Eq. \ref{simple_sdp_char_three}.
\end{proof}

\subsection{The Algorithm}

\label{sub_algorithm}

The proposed algorithm is summarized as follows.

\begin{itemize}

\item Step 1: set k=1, set $\pmb{X}$ to a feasible point;

\item Step 2: calculate the gradient $\bigtriangledown
f(\pmb{X})$;

\item Step 3: solve the optimization problem in Eq.
\ref{dual_maximization_one}, obtain $\alpha_i$, $y_i$, $\pmb{v}_i$
for $i\in {\mathcal T}$;

\item Step 4: calculate the function $g(\pmb{X})$, if  $g(\pmb{X})$
is less than a certain threshold, go to step 8, otherwise, go to the
next step;

\item Step 5: update $\Delta \pmb{X}$ as follows,
\begin{align}
\Delta \pmb{X} = \beta_k \left(\left(\sum_{i\in {\mathcal T}}
y_i\pmb{v}_i\pmb{v}_i^t\right) - \pmb{X} - \pmb{I}\right)
\end{align}
where, $\beta_k$ is a predefined step size parameter;

\item Step 6: update $\pmb{X}=\pmb{X}+\Delta \pmb{X}$;

\item Step 7: set k=k+1, go to step 2;

\item Step 8: return $\pmb{X}$, stop.

\end{itemize}

\subsection{Correction and Convergence}

\label{sub_correctness}

In this subsection, we show that the proposed algorithm is correct
and converges.

\begin{theorem}
\label{main_theorem_one} In the proposed algorithm, the diagonal
elements of $\pmb{X}$ are zeros, and $\lambda_{\min}(\pmb{X})\geq
-1$.
\end{theorem}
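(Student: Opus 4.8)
The plan is to prove both claims by induction on the iteration counter $k$, tracking the invariant that at the start of every iteration the current matrix $\pmb{X}$ has all-zero diagonal and satisfies $\pmb{X}+\pmb{I}\succeq 0$ (equivalently $\lambda_{\min}(\pmb{X})\geq -1$ by Lemma \ref{basic_lemma_one}). The base case is immediate: Step 1 initializes $\pmb{X}$ to a feasible point, so by feasibility of the optimization problem in Eq. \ref{main_optimization_problem} the diagonal is zero and $\pmb{X}+\pmb{I}\succeq 0$.

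For the inductive step, assume $\pmb{X}$ satisfies the invariant and consider the updated matrix $\pmb{X}' = \pmb{X} + \Delta\pmb{X}$ from Steps 5--6. Writing $\pmb{S} = \sum_{i\in{\mathcal T}} y_i\pmb{v}_i\pmb{v}_i^t$, the update is $\pmb{X}' = \pmb{X} + \beta_k(\pmb{S} - \pmb{X} - \pmb{I}) = (1-\beta_k)\pmb{X} + \beta_k(\pmb{S} - \pmb{I})$, i.e. a convex combination (for $0\le\beta_k\le 1$) of $\pmb{X}$ and $\pmb{S}-\pmb{I}$. First I would check the diagonal condition: $\pmb{X}$ has zero diagonal by hypothesis, so it suffices that $\pmb{S}-\pmb{I}$ has zero diagonal, i.e. that the diagonal entries of $\pmb{S}$ are all $1$. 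The $j$th diagonal entry of $\pmb{S}$ is $\sum_{i\in{\mathcal T}} y_i v_{ij}^2$, which is exactly the $j$th component of $\pmb{V}^t\pmb{y}$ in the notation of Lemma \ref{lemma_def_y}; by Eq. \ref{simple_sdp_char_two} this equals $1$ for every $j$. Hence $\pmb{S}-\pmb{I}$, and therefore $\pmb{X}'$, has zero diagonal.

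Next I would check the eigenvalue condition via Lemma \ref{basic_lemma_two}. We already have $\lambda_{\min}(\pmb{X})\geq -1$ by the inductive hypothesis, so it remains to show $\lambda_{\min}(\pmb{S}-\pmb{I})\geq -1$, i.e. $\pmb{S}\succeq 0$. But $\pmb{S} = \sum_{i\in{\mathcal T}} y_i\pmb{v}_i\pmb{v}_i^t$ is a nonnegative combination of rank-one positive-semidefinite matrices, since $y_i\geq 0$ by Eq. \ref{simple_sdp_char_one}; hence $\pmb{S}\succeq 0$ and $\lambda_{\min}(\pmb{S}-\pmb{I})\geq -1$. Since $\pmb{X}' = \beta_k(\pmb{S}-\pmb{I}) + (1-\beta_k)\pmb{X}$ with $0\le\beta_k\le 1$, Lemma \ref{basic_lemma_two} gives $\lambda_{\min}(\pmb{X}')\geq -1$, closing the induction. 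Applying Lemma \ref{basic_lemma_one} once more translates this into $\pmb{X}'+\pmb{I}\succeq 0$ if that form is wanted.

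The only genuine obstacle is making sure the properties invoked from Lemma \ref{lemma_def_y} — namely $\pmb{y}\geq 0$ and $\pmb{V}^t\pmb{y} = [1,\dots,1]^t$ — are legitimately available at \emph{every} iteration; this requires that Step 3 indeed returns a minimizer of Eq. \ref{dual_maximization_one} together with its KKT multipliers, so that $y_i$ and the active set ${\mathcal T}$ used in the Step 5 update are precisely those of Lemma \ref{lemma_def_y}. One should also note the implicit requirement $0\le\beta_k\le 1$ on the predefined step sizes, which is needed for the convex-combination argument; this is a standard choice (e.g. $\beta_k = 2/(k+1)$ as in Hazan's algorithm) and I would state it explicitly. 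Everything else is a routine bookkeeping of diagonal entries and positive-semidefiniteness.
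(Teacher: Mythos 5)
Your proof is correct and follows essentially the same route as the paper's: induction on the iteration, writing the update as the convex combination $(1-\beta_k)\pmb{X}+\beta_k(\pmb{S}-\pmb{I})$, using $\pmb{V}^t\pmb{y}=[1,\ldots,1]^t$ from Lemma \ref{lemma_def_y} to get unit diagonal of $\pmb{S}$, and using $\pmb{y}\geq 0$ (hence $\pmb{S}\succeq 0$) together with Lemma \ref{basic_lemma_two} for the eigenvalue bound. Your explicit remark that $0\leq\beta_k\leq 1$ is required is a point the paper leaves implicit, but otherwise the arguments coincide.
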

\begin{proof}
Prove by induction. It is sufficient to show that $\pmb{X}+\Delta
\pmb{X}$ satisfies the above conditions, if $\pmb{X}$ satisfies the
conditions.

Note that the $k$th diagonal element of the matrix $\sum_{i \in
{\mathcal T}} y_i \pmb{v}_i\pmb{v}_i^t$ is equal to $\sum_{i \in
{\mathcal T}} y_i v_{ik}^2$, is also equal to the $k$th element of
$\pmb{V}^t \pmb{y}$. From Lemma \ref{lemma_def_y}, we have that the
$k$th element of $\pmb{V}^t \pmb{y}$ is one. Therefore, the diagonal
elements of the matrix $\sum_{i \in {\mathcal T}} y_i
\pmb{v}_i\pmb{v}_i^t-\pmb{X} - \pmb{I}$ are all zeros. The diagonal
elements of the matrix $\pmb{X}+\Delta\pmb{X}$ are also all zeros.

We can show that $\lambda_{\min}(\pmb{X}+\Delta\pmb{X})\geq -1$, if
we can show that
\begin{align}
\label{temp_69}
 \lambda_{\min}\left[\left(\sum_{i\in {\mathcal T}} y_i\pmb{v}_i\pmb{v}_i^t\right) -
 \pmb{I}\right]\geq -1,
 \end{align}
\begin{align}
\label{temp70} \lambda_{\min}\left(\pmb{X}\right)\geq -1.
\end{align}
This is because of Lemma \ref{basic_lemma_two} and
$\pmb{X}+\Delta\pmb{X}$ being a linear combination of the above two
matrices
\begin{align}
\pmb{X}+\Delta\pmb{X} = \beta_k \left(\sum_{i\in {\mathcal T}}
y_i\pmb{v}_i\pmb{v}_i^t - \pmb{I}\right) + (1-\beta_k)\pmb{X}
\end{align}
Eq. \ref{temp70} follows from the given hypothesis. Eq.
\ref{temp_69} follows from Lemma \ref{basic_lemma_one}, and
$\sum_{i\in {\mathcal T}} y_i\pmb{v}_i\pmb{v}_i^t$ being positive
semi-definite.
 The theorem is proven.
\end{proof}

\begin{theorem}
\label{main_theorem_two} In the proposed optimization algorithm, let
$\pmb{X}_k$ denote the value of $\pmb{X}$ after $k$ iterations.
Then, the gap
\begin{align}
 h(\pmb{X}_{k+1})  & \leq (1-\beta_k)h(\pmb{X}_k) +
\beta_k^2 C_f.
\end{align}
Therefore,  $h(\pmb{X}_k)$ goes to zero, and $f(\pmb{X}_k)$ goes to
$f(\pmb{X}^\ast)$, for properly chosen step size parameters
$\beta_k$.
\end{theorem}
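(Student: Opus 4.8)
The plan is to run a conditional-gradient (Frank--Wolfe) style argument, in the spirit of Hazan's analysis, in three stages. First, establish the one-step inequality $f(\pmb{X}_{k+1}) \leq f(\pmb{X}_k) - \beta_k g(\pmb{X}_k) + \beta_k^2 C_f$ by combining the curvature-constant bound with an identification of the linearized decrease along the update direction with the negative duality gap. Second, invoke the weak duality inequality $h(\pmb{X}_k)\leq g(\pmb{X}_k)$ to pass from this to the stated recursion. Third, solve the recursion by induction for a decaying step-size schedule to conclude $h(\pmb{X}_k)\to 0$.

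For the first stage, write the update from Step 5 as $\pmb{X}_{k+1} = \pmb{X}_k + \beta_k(\pmb{Z}_k - \pmb{X}_k)$ with $\pmb{Z}_k = \bigl(\sum_{i\in\mathcal{T}} y_i\pmb{v}_i\pmb{v}_i^t\bigr) - \pmb{I}$. To apply the curvature constant $C_f$ to the triple $(\pmb{X}_k,\pmb{Z}_k,\pmb{X}_{k+1})$, both $\pmb{X}_k$ and $\pmb{Z}_k$ must have zero diagonal and satisfy $\pmb{X}_k+\pmb{I}\succeq 0$ and $\pmb{Z}_k+\pmb{I}\succeq 0$ respectively: for $\pmb{X}_k$ this is Theorem \ref{main_theorem_one}; for $\pmb{Z}_k$, the diagonal of $\sum_{i\in\mathcal{T}} y_i\pmb{v}_i\pmb{v}_i^t$ equals $\pmb{V}^t\pmb{y}=[1,\ldots,1]^t$ by Lemma \ref{lemma_def_y}, so $\pmb{Z}_k$ has zero diagonal, while $\pmb{Z}_k+\pmb{I}=\sum_{i\in\mathcal{T}} y_i\pmb{v}_i\pmb{v}_i^t\succeq 0$ since $\pmb{y}\geq 0$. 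Hence the defining inequality of $C_f$ gives
\[
f(\pmb{X}_{k+1}) \leq f(\pmb{X}_k) + \beta_k(\pmb{Z}_k - \pmb{X}_k)\cdot\bigtriangledown f(\pmb{X}_k) + \beta_k^2 C_f .
\]
Now identify the linear term. Because $f$ is independent of the diagonal of $\pmb{X}$, the matrix $\bigtriangledown f(\pmb{X}_k)$ has zero trace, so $\pmb{Z}_k\cdot\bigtriangledown f(\pmb{X}_k) = \sum_{i\in\mathcal{T}} y_i\,\pmb{v}_i^t\bigtriangledown f(\pmb{X}_k)\pmb{v}_i = \sum_{i\in\mathcal{T}} y_i\alpha_i$. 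By the Corollary, $\sum_{i\in\mathcal{T}} y_i\alpha_i = -\sum_i\lambda_i^\ast = -tr(\pmb{\lambda}^\ast)$, and by the definition of $w$ one has $g(\pmb{X}_k) = f(\pmb{X}_k)-w(\pmb{X}_k) = \pmb{X}_k\cdot\bigtriangledown f(\pmb{X}_k) + tr(\pmb{\lambda}^\ast)$. Therefore $(\pmb{Z}_k-\pmb{X}_k)\cdot\bigtriangledown f(\pmb{X}_k) = -tr(\pmb{\lambda}^\ast) - \pmb{X}_k\cdot\bigtriangledown f(\pmb{X}_k) = -g(\pmb{X}_k)$, which yields the one-step inequality. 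Subtracting $f(\pmb{X}^\ast)$ from both sides and using $g(\pmb{X}_k)\geq h(\pmb{X}_k)$ together with $\beta_k\geq 0$ gives $h(\pmb{X}_{k+1}) \leq (1-\beta_k)h(\pmb{X}_k) + \beta_k^2 C_f$.

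For the convergence conclusion, choose a diminishing schedule such as $\beta_k = 2/(k+1)$ and prove by induction on $k$ that $h(\pmb{X}_k)\leq 4C_f/(k+1)$ (the precise constant is immaterial); the induction step is the routine estimate $(1-\beta_k)\tfrac{4C_f}{k+1} + \beta_k^2 C_f \leq \tfrac{4C_f}{k+2}$. Since $h(\pmb{X}_k)\geq 0$ (as $\pmb{X}_k$ is feasible and $\pmb{X}^\ast$ is the minimizer), this forces $h(\pmb{X}_k)\to 0$, i.e. $f(\pmb{X}_k)\to f(\pmb{X}^\ast)$; more generally, any schedule with $\beta_k\to 0$ and $\sum_k\beta_k=\infty$ works.

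The main obstacle is the middle identification, $(\pmb{Z}_k-\pmb{X}_k)\cdot\bigtriangledown f(\pmb{X}_k) = -g(\pmb{X}_k)$: this is where the structure of the dual subproblem in Eq. \ref{dual_maximization_one}, its KKT characterization (Lemma \ref{lemma_def_y}), and the Corollary are all used, and where one must also verify that the point produced by Step 5 is genuinely feasible so that $C_f$ applies. Both the feasibility and the zero-diagonal property of $\pmb{Z}_k$ are essentially contained in the proof of Theorem \ref{main_theorem_one}; everything that remains, including the induction giving the $O(1/k)$ rate, is bookkeeping.
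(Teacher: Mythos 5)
Your proof is correct and follows essentially the same route as the paper: the curvature bound along the Frank--Wolfe direction, the identification of the linearized decrease with $-g(\pmb{X}_k)$ via Lemma \ref{lemma_def_y} and the Corollary, and weak duality to replace $g$ by $h$. You additionally spell out the feasibility of $\pmb{Z}_k$ and the explicit $\beta_k=2/(k+1)$ schedule with the $O(1/k)$ induction, which the paper only asserts implicitly with ``properly chosen step size parameters.''
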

\begin{proof}
First, we wish to show that the following equality holds.
\begin{align}
\label{useful_reasoning}
 \left(\sum_{i\in {\mathcal T}}y_i\pmb{v}_i\pmb{v}_i^t\right)\cdot
\bigtriangledown f(\pmb{X})  = \sum_{i\in {\mathcal T}}y_i\alpha_i
\end{align}
The reasoning is as follows.
\begin{align}
& \left(\sum_{i\in {\mathcal T}}y_i\pmb{v}_i\pmb{v}_i^t\right)\cdot
\bigtriangledown f(\pmb{X})  = \sum_{i\in {\mathcal
T}}\left(y_i\pmb{v}_i\pmb{v}_i^t\cdot \bigtriangledown
f(\pmb{X})\right) \nonumber \\
& = \sum_{i\in {\mathcal T}}\left(tr\left(y_i\pmb{v}_i\pmb{v}_i^t
\bigtriangledown
f(\pmb{X})\right)\right) \nonumber \\
& \stackrel{(a)}{=} \sum_{i\in {\mathcal
T}}y_i\left(tr\left(\pmb{v}_i^t \bigtriangledown
f(\pmb{X})\pmb{v}_i\right)\right) \nonumber \\
& = \sum_{i\in {\mathcal T}}y_i\left(\pmb{v}_i^t \bigtriangledown
f(\pmb{X})\pmb{v}_i\right) \nonumber \\
& \stackrel{(b)}{=} \sum_{i\in {\mathcal T}}y_i\alpha_i
\end{align}
where, (a) follows from the the property of trace,
$tr(\pmb{A}\pmb{B})=tr(\pmb{B}\pmb{A})$ for all matrices $\pmb{A}$
and $\pmb{B}$, and (b) follows from the definition of $\alpha_i$.

The value of $f(\pmb{X}_{k+1})$ can be upper bounded as follows.
\begin{align}
& f(\pmb{X}_{k+1}) = f\left(\pmb{X}_k+\beta_k\left(\sum_{i\in
{\mathcal T}} y_i \pmb{v}_i\pmb{v}_i^t -
 \pmb{X}_k-\pmb{I}\right)\right) \nonumber \\
& \stackrel{(a)}{\leq} f(\pmb{X}_k) + \beta_k\left(\sum_{i\in
{\mathcal T}} y_i\pmb{v}_i\pmb{v}_i^t - \pmb{X}_k
-\pmb{I}\right)\cdot\bigtriangledown f(\pmb{X}_k) + \beta_k^2 C_f
\nonumber \\
& \stackrel{(b)}{=} f(\pmb{X}_k) + \beta_k\left(\left(\sum_{i \in
{\mathcal T}}y_i\pmb{v}_i\pmb{v}_i^t\right)\cdot \bigtriangledown
f(\pmb{X}_k) - \pmb{X}_k \cdot \bigtriangledown f(\pmb{X}_k)\right)
\nonumber \\
& \hspace{0.5in} + \beta_k^2 C_f \nonumber \\
& \stackrel{(c)}{=} f(\pmb{X}_k) +\beta_k\left(\sum_{i\in {\mathcal
T}}y_i\alpha_i - \pmb{X}_k \cdot
\bigtriangledown f(\pmb{X}_k)\right)+ \beta_k^2 C_f \nonumber \\
& \stackrel{(d)}{=} f(\pmb{X}_k)
+\beta_k\left(-\sum_{i}\lambda_i^\ast - \pmb{X}_k \cdot
\bigtriangledown f(\pmb{X}_k)\right)+ \beta_k^2 C_f \nonumber \\
& = f(\pmb{X}_k) -\beta_k g(\pmb{X}_k) + \beta_k^2 C_f,
\end{align}
where, (a) follows from the definition of $C_f$, (b) follows from
the fact that the diagonal elements of $\bigtriangledown f(\pmb{X})$
are all zeros, $\pmb{I}\cdot\bigtriangledown f(\pmb{X})=0$, (c)
follows from Eq. \ref{useful_reasoning}, and (d) follows from Eq.
\ref{simple_sdp_char_three}.

 Therefore, we have
\begin{align}
 h(\pmb{X}_{k+1})  & = f(\pmb{X}_{k+1}) - f(\pmb{X}^\ast) \nonumber
 \\
& \leq f(\pmb{X}_{k}) - f(\pmb{X}^\ast) - \beta_k g(\pmb{X}_k) +
\beta_k^2C_f \nonumber \\
& \leq h(\pmb{X}_{k}) - \beta_k g(\pmb{X}_k) + \beta_k^2C_f
\nonumber \\
& \leq h(\pmb{X}_{k}) - \beta_k h(\pmb{X}_k) +
\beta_k^2C_f \nonumber \\
& \leq (1-\beta_k)h(\pmb{X}_k) + \beta_k^2 C_f.
\end{align}
The theorem follows.
\end{proof}

\subsection{Discussion}

\label{sub_discussion}

One character of the proposed algorithm is that a close approximate
solution can be found after only a few iterations. By Theorem
\ref{main_theorem_two}, we can see that the optimal step size
parameter $\beta_k$ at the $k$th iteration depends on the current
gap $h(\pmb{X}_k)$ and $C_f$. At the first several iterations, the
parameter $\beta_k$ can take larger values, and the gap
$h(\pmb{X}_k)$ decreases quickly.

Because the solution of the SDP optimization problem is an
intermediate result in the demodulation and equalization algorithm,
approximate solutions are usually sufficient to ensure that the
demodulation results are correct with high probability. In fact, we
find that only few iterations are usually needed to ensure low
demodulation error probability by simulation results.

During each iteration, one optimization problem needs to be solved
to calculate the dual function. However, compared with the original
matrix optimization problem with approximately $n^2$ optimization
variables, the optimization problem in dual function calculation
only has $n$ optimization variables. Therefore, the optimization
problem in each iteration can be solved with lower computational
complexity and storage requirements.

Overall, the proposed algorithm has lower computational complexity
and storage requirements. It is an attractive choice for high-speed
real-time demodulation implementations.

\section{Numerical Results }

\label{sec_numerical}

In this section, we present simulation results for the proposed
demodulation and equalization scheme with approximate SDP
programming. We assume that the transmitted pulses are the second
derivative Gaussian monocycles,
\begin{align}
\bar{w}(t)=\left[1-4\pi\left(t/\tau_m\right)^2\right]
\exp\left\{-2\pi \left(t/\tau_m\right)^2\right\},
\end{align}
where $\tau_m=0.2877$ nanosecond. Each information bearing signal
block consists of $N_b=10$ symbols and each symbol corresponds to
$N_p=4$ pulses. The symbol duration $T_s=8$ nanoseconds.

We use the IEEE 802.15.4a channel models as described in
\cite{molisch04}. Two types of channel models CM1 and CM6 are used
for simulation. We illustrate the bit error probability of the
proposed demodulation and equalization algorithm in the case of CM6
channel models in Fig. \ref{fig_long_bit_error}. A typical channel
impulse response of the CM6 model is shown in Fig.
\ref{fig_long_impulse}. We illustrate the bit error probability of
the proposed demodulation and equalization algorithm in the case of
CM1 channel models in Fig. \ref{fig_short_bit_error}. A typical
channel impulse response of the CM1 model is shown in Fig.
\ref{fig_short_impulse}. The numerical results show that the
proposed demodulation algorithm has satisfactory bit error
probability performance.

\begin{figure}[h]
 \centering
 \includegraphics[width=3.5in]{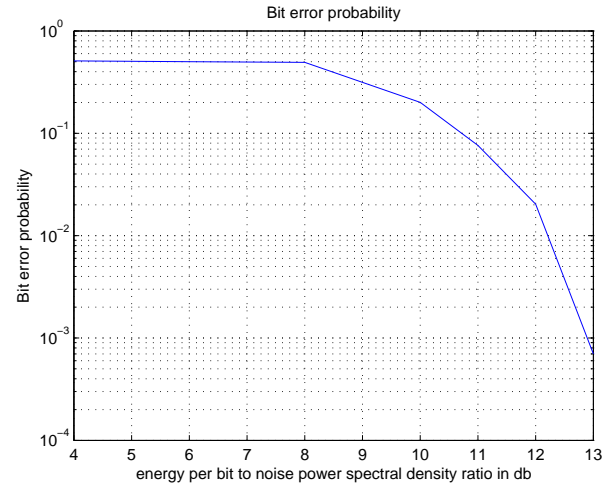}
 \caption{Bit error probabilities for the CM6 channel model. The X-axis shows
 energy per bit to noise power spectral density ratio $E_b/N_0$ in
 dB}
 \label{fig_long_bit_error}
\end{figure}

\begin{figure}[h]
 \centering
 \includegraphics[width=3.5in]{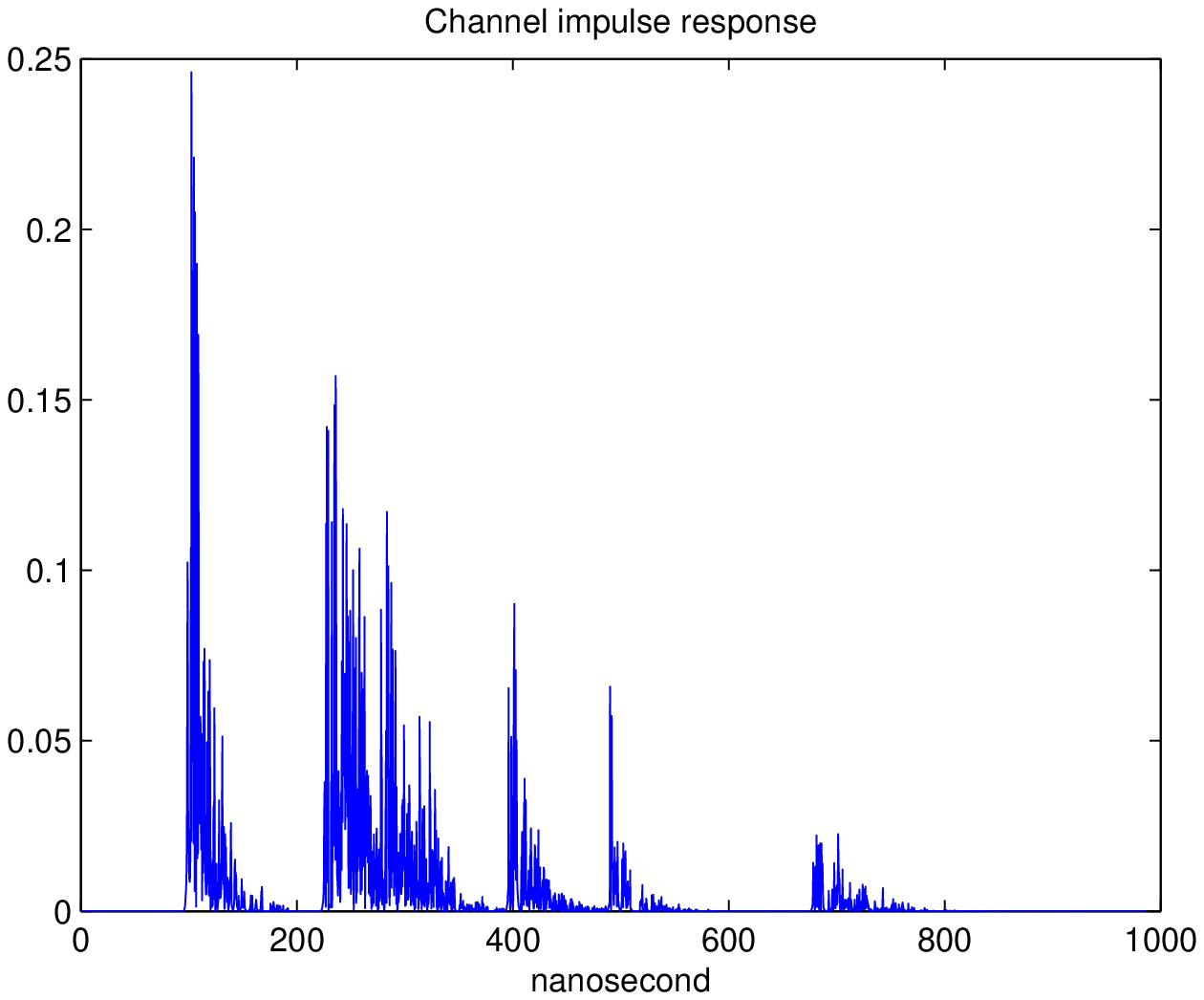}
 \caption{Typical channel impulse response in the CM6 channel model}
 \label{fig_long_impulse}
\end{figure}

\begin{figure}[h]
 \centering
 \includegraphics[width=3.5in]{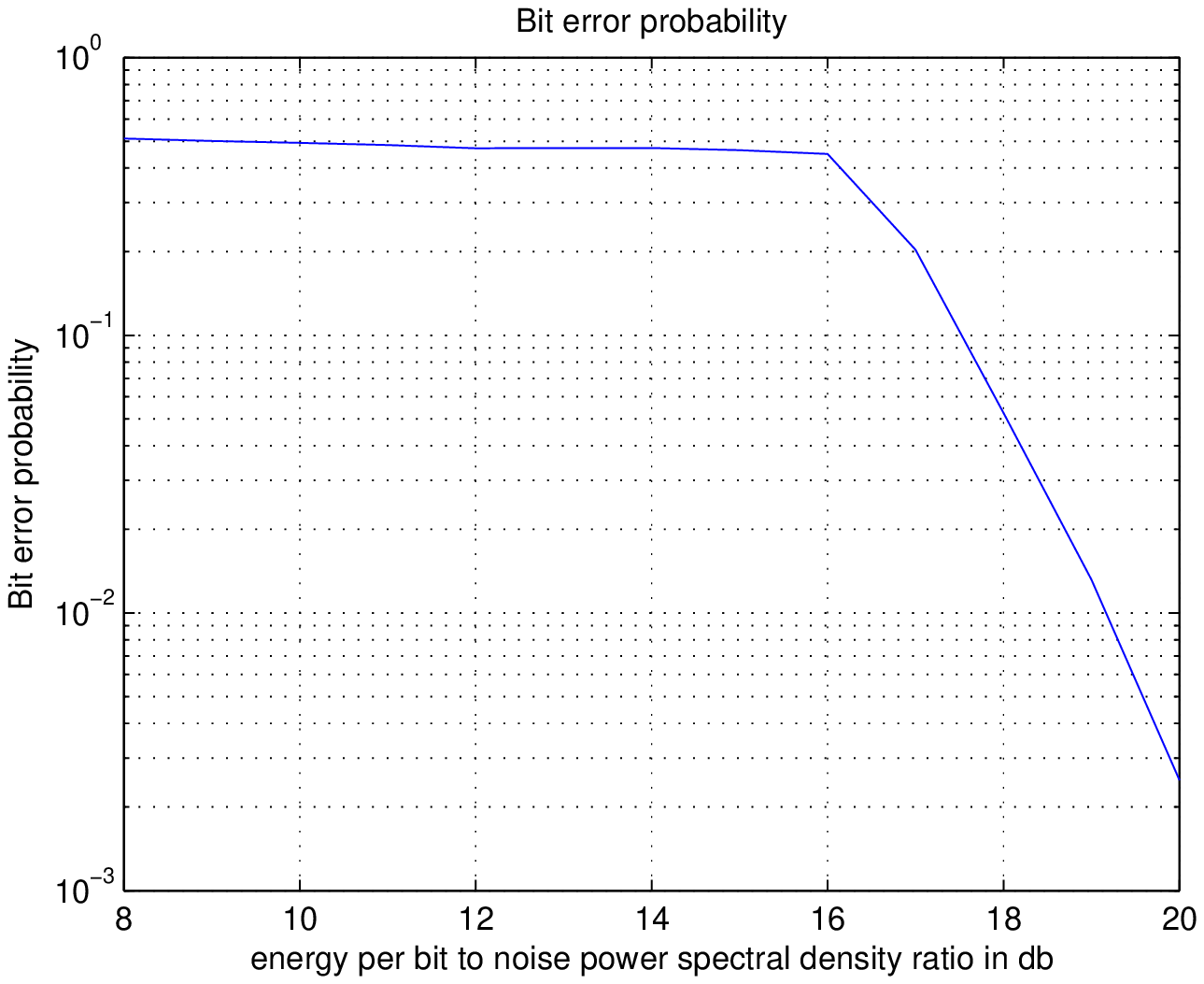}
 \caption{Bit error probabilities for the CM1 channel model. The X-axis shows
 energy per bit to noise power spectral density ratio $E_b/N_0$ in
 dB
 }
 \label{fig_short_bit_error}
\end{figure}

\begin{figure}[h]
 \centering
 \includegraphics[width=3.5in]{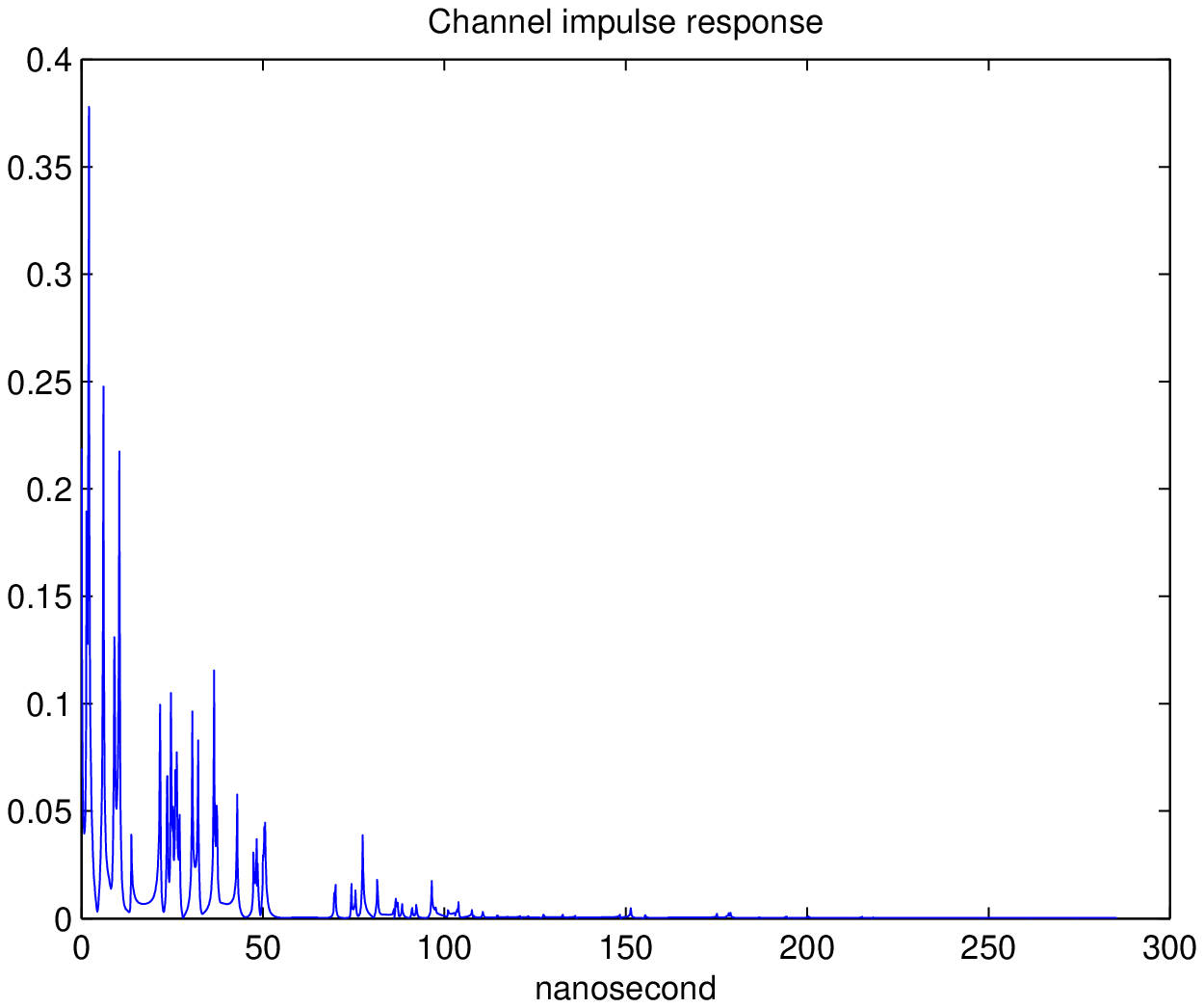}
 \caption{Typical channel impulse response in the CM6 channel model
 }
 \label{fig_short_impulse}
\end{figure}

\section{Conclusion}

\label{sec_conclusion}

In this paper, we propose an approximate semi-definite programming
framework for demodulation and equalization of non-coherent UWB
systems with inter-symbol-interference. The proposed algorithm has
low computational complexity and storage requirements, which make it
an attractive choice for real-time high-speed implementations.
Numerical results show that the proposed approach has satisfactory
error probability performance. The proposed approach can be adopted
in a wide spectrum of non-coherent UWB modulation schemes.

\bibliographystyle{IEEEtran}
\bibliography{approximate_decoding}

\begin{thebibliography}{10}
\providecommand{\url}[1]{#1}
\csname url@samestyle\endcsname
\providecommand{\newblock}{\relax}
\providecommand{\bibinfo}[2]{#2}
\providecommand{\BIBentrySTDinterwordspacing}{\spaceskip=0pt\relax}
\providecommand{\BIBentryALTinterwordstretchfactor}{4}
\providecommand{\BIBentryALTinterwordspacing}{\spaceskip=\fontdimen2\font plus
\BIBentryALTinterwordstretchfactor\fontdimen3\font minus
  \fontdimen4\font\relax}
\providecommand{\BIBforeignlanguage}[2]{{%
\expandafter\ifx\csname l@#1\endcsname\relax
\typeout{** WARNING: IEEEtran.bst: No hyphenation pattern has been}%
\typeout{** loaded for the language `#1'. Using the pattern for}%
\typeout{** the default language instead.}%
\else
\language=\csname l@#1\endcsname
\fi
#2}}
\providecommand{\BIBdecl}{\relax}
\BIBdecl

\bibitem{ma09}
X.~Ma, ``Reduced complexity demodulation and equalization scheme for
  differential impulse radio uwb systems with isi,'' in \emph{Proc. the IEEE
  Sarnoff Symposium}, Princeton NJ, March 2009.

\bibitem{win00}
M.~Z. Win and R.~A. Scholtz, ``Ultra-wide bandwidth time-hopping
  spread-spectrum impulse radio for wireless multiple-access communications,''
  \emph{IEEE Transactions on Information Theory}, vol.~48, no.~4, pp. 679 --
  691, April 2000.

\bibitem{saberinia03}
E.~Saberinia and A.~H. Tewfik, ``Pulsed and nonpulsed {OFDM} ultra wideband
  wireless personal area networks,'' in \emph{Proc. IEEE Conference on Ultra
  Wideband Systems and Technologies (UWBST 2003)}, Reston, Virginia, USA,
  November 2003, pp. 275--279.

\bibitem{batra04}
A.~Batra, J.~Balakrishnan, G.~Aiello, J.~Foerster, and A.~Dabak, ``Design of a
  multiband {OFDM} system for realistic {UWB} channel environments,''
  \emph{IEEE Transactions on Microwave Theory and Techniques}, vol.~52, no.~9,
  pp. 2123--2138, September 2004.

\bibitem{ho02}
M.~Ho, S.~Somayazulu, J.~Foerster, and S.~Ray, ``A differential detector for an
  ultra-wideband communication system,'' in \emph{Proc. IEEE Semiannual
  Vehicular Technology Conference (VTC Spring 2002)}, vol.~4, Birmingham, AL,
  USA, May 2002, pp. 1896--1900.

\bibitem{choi02}
D.~Choi and W.~Stark, ``Performance of ultra-wideband communication with
  suboptimal receivers in multipath channels,'' \emph{IEEE Journal of Selected
  Area of Communications}, vol.~20, pp. 1754--1766, Dec. 2002.

\bibitem{mo07}
S.~Mo, N.~Guo, J.~Zhang, and R.~Qiu, ``{UWB} {MISO} time reversal with energy
  detector receiver over {ISI} channels,'' in \emph{Proc. IEEE Consumer
  Communications and Networking Conference}, Las Vegas, Neveda, Jan. 2007, pp.
  629--633.

\bibitem{witrisal}
K.~Witrisal, G.~Leus, M.~Pausini, and C.~Krall, ``Equivalent system model and
  equalization of differential impulse radio {UWB} systems,'' \emph{IEEE
  Journal on Selected Areas in Communications}, vol.~23, no.~9, pp. 1851 --
  1862, September 2005.

\bibitem{hazan}
E.~Hazan, ``Sparse approximate solutions to semidefinite programs,'' in
  \emph{Proc. the 8th Latin American Theoretical Informatics Symposium},
  Buzios, Rio de Janeiro, Brazil, April 2008.

\bibitem{boyd}
S.~Boyd and L.~Vandenberghe, \emph{Convex Optimization}.\hskip 1em plus 0.5em
  minus 0.4em\relax The Edinburgh Building, Cambridge, UK: Cambridge University
  Press, 2004.

\bibitem{cong01}
E.~K.~P. Chong and S.~H. Zak, \emph{An Introduction to Optimization}.\hskip 1em
  plus 0.5em minus 0.4em\relax New York, NY: John Wiley \& Sons, Inc, 2001.

\bibitem{molisch04}
A.~F. Molisch, K.~Balakrishnan, C.-C. Chong, S.~Emami, A.~Fort, J.~Karedal,
  J.~Kunisch, H.~Schantz, U.~Schuster, and K.~Siwiak, ``Ieee 802.15.4a channel
  model - final report,'' IEEE, Tech. Rep.
  15-04-0662-00-004a-channel-model-final-report-r1, September 2004.

\end{thebibliography}

\end{document}